\newcommand{\var}{\mathtt{var}}
\newcommand{\sat}{\mathtt{sat}}
\newcommand{\cla}{\mathtt{cla}}
\newcommand{\bigoh}{\mathcal{O}}
\theoremstyle{definition}
\newtheorem{observation}[theorem]{Observation}
\title{On Satisfiability Problems with a Linear Structure}
\author[1,2]{Serge Gaspers}
\author[3]{Christos Papadimitriou}
\author[4]{Sigve Hortemo S\ae ther}
\author[4]{Jan Arne Telle}
\affil[1]{UNSW, Sydney, Australia\\sergeg@cse.unsw.edu.au}
\affil[2]{Data61 (formerly NICTA), Sydney, Australia}
\affil[3]{UC Berkeley, USA\\christos@berkeley.edu}
\affil[4]{University of Bergen, Norway\\sigve.sether@ii.uib.no, telle@ii.uib.no}
\authorrunning{S.Gaspers, C.Papadimitriou, S.H.S\ae ther, J.A.Telle } 
\begin{document}

\maketitle

\begin{abstract}
It was recently shown \cite{STV} that satisfiability is polynomially solvable  when the incidence graph is an interval bipartite graph (an interval graph turned into a bipartite graph by omitting all edges within each partite set).  Here we relax this condition in several directions:  First, we show that it holds for $k$-interval bigraphs, bipartite graphs which can be converted to interval bipartite graphs by adding to each node of one side at most $k$ edges; the same result holds for the counting and the weighted maximization version of satisfiability.  Second, given two linear orders, one for the variables and one for the clauses, we show how to find, in polynomial time, the smallest $k$ such that there is a $k$-interval bigraph compatible with these two orders. On the negative side we prove that, barring complexity collapses, no such extensions are possible for CSPs more general than satisfiability. 
We also show NP-hardness of recognizing 1-interval bigraphs.
\end{abstract}


\section{Introduction}
Constraint satisfaction problems (CSPs) such as satisfiability are both ubiquitous and difficult to solve. It is therefore essential to identify and exploit any special structure of instances that make CSPs susceptible to  algorithmic techniques.  One large class of such structured instances comprises CSPs whose constraints can be arranged in a linear manner, presumably reflecting temporal or spatial ordering of the real-life problem being modeled.  A well known example is the {\em car sequencing} class of CSPs proposed by the French automobile manufacturer Renault in 2005 and reviewed in \cite{Solnon}.  

But defining what it means for a CSP to have ``a linear structure'' is not straightforward. The linear structure should be reflected in the incidence graph of the instance, but how exactly? 
Previous work has considered satisfiability instances with incidence graphs of bounded tree-width or bounded clique-width  \cite{DBLP:journals/dam/FischerMR08,DBLP:conf/stacs/PaulusmaSS13,DBLP:journals/jda/SamerS10,DBLP:conf/isaac/SlivovskyS13,DBLP:conf/sat/Szeider03}.
Instances that are in some sense close to efficiently solvable instances have been studied in terms of backdoors \cite{DBLP:conf/birthday/GaspersS12,DBLP:conf/ijcai/WilliamsGS03}, in particular for CNF formulas that have a small number of variables whose instantiations give formulas of bounded treewidth \cite{DBLP:conf/focs/GaspersS13}. An important special case of bounded tree-width is bounded path-width, a measure of how path-like a graph is and a strong indication of linear structure. Bounded clique-width is a stronger notion, in which the graph's cliques have a linear structure.

Recently another direction for defining linear structure in CSPs was proposed, based in a time honored graph-theoretic conception of linear structure: {\em interval graphs,} the intersection graphs of intervals on the line.  Interval graphs are a well-known class of graphs, going back to the 1950s, used to model temporal reasoning \cite{Golumbic:1993:CAR:174147.169675}, e.g. in resource allocation and scheduling \cite{Bar-Noy:2000:UAA:335305.335410}. However, the incidence graphs we care about are bipartite, and the only connected interval graphs that are bipartite are trees.  A bipartite version of interval graphs was introduced by Harary et al.~in 1982 \cite{harary1982bipartite}: An {\em interval bigraph} is, informally, a bipartite graph\footnote{We use ``bigraph'' and ``bipartite graph'' interchangeably.} in which each vertex is associated with an interval, and there is an edge between two vertices {\em on different sides} if and only if the corresponding intervals intersect.   Interval bigraphs form a natural and fairly rich class of bipartite graphs, containing, {\sl e.g.,} all bipartite permutation graphs, which have been shown to have unbounded clique-width and thus also unbounded treewidth or pathwidth \cite{DBLP:journals/arscom/BrandstadtL03}. 

Interval bigraphs have been studied quite extensively, and several important facts are known about them.  First, they can be recognized in polynomial time:  In 1997 
M{\"u}ller gave an algorithm with running time $\bigoh(|V||E|^6(|V|+|E|)\log |V|)$ \cite{DBLP:journals/dam/Muller97}, and a 2012 technical report \cite{DBLP:journals/corr/abs-1211-2662} gives an algorithm with running time $\bigoh(|V|(|E|+|V|))$.  Importantly, Hell and Huang \cite{HellHuang} gave in 2004 a useful alternative characterization of interval bigraphs  as all bipartite graphs whose set of vertices can be ordered on the line so that the set of neighbors of each vertex coincides with an interval whose high end is the position of the vertex (see Lemma  \ref{lemma} below for the formal statement).

Interval bigraphs constitute a natural basis for identifying an important class of CSPs possessing a linear order:  Define {\em an interval CSP} as a CSP whose variable-constraint incidence graph is an interval bigraph.  In \cite{STV}, a general dynamic programming approach to solving CSPs was developed, and one consequence of that framework is that satisfiability --- even weighted {\small MAXSAT} and {\small $\#$SAT} --- on interval CNF formulae with $m$ clauses and $n$ variables can be solved in time $\bigoh(m^3(m+n))$ (stated as Theorem \ref{stv} below). See also \cite{Brault-Baron}.

The present work is about extending this result, in several natural directions:

\begin{enumerate}
\item  Many CSPs are not interval CSPs.  Can the definition of interval CSPs be extended usefully, so that a limited number of ``faults'' in the interval structure of CSPs is tolerated by polynomial time algorithms?
\item A second question is, suppose that an instance of satisfiability has no natural overall linear order over the union of its variables and constraints, but we know a natural linear order for the constraints, and another natural linear order for the variables.  Under what circumstances is it possible to merge these two linear orders into one, so that the resulting bipartite graph is an interval bigraph?  
\item If an order as in (2) above does not exist, can at least a merged order be found so the resulting bipartite graph is as close as possible (presumably in some algorithmically useful sense as in (1) above) to an interval bigraph?
\item  Finally, to what extent can these algorithmic results be extended to CSPs beyond satisfiability?
\end{enumerate}

In this paper we address and largely resolve these questions.  In particular, our contributions are the following:
\begin{enumerate}
\item  We define a useful measure of how much the incidence graph of a CSP instance differs from an interval bigraph:  The smallest number $k$ such that the incidence bigraph becomes interval if each constraint vertex of the bigraph has at most $k$ edges added to it. Deciding if $k \leq 1$ is NP-hard (Theorem \ref{sec}) but we show (Theorem \ref{first}) that given an ordering certifying a value of $k$ such instances of satisfiability with $m$ clauses and $n$ variables can be solved in $\bigoh(m^34^k(m+n))$ time.  Ditto for {\small MAXSAT} and {\small $\#$SAT}; the exponential dependence on $k$ is, of course, expected.
\item  We give a simple characterization of when two linear orders, one for constraints and one for variables, can be merged so that the resulting total order satisfies the Hell-Huang characterization of interval bigraphs.  
\item  We also show that, if no such merging is possible, we can find in polynomial time --- through a greedy algorithm --- the minimum $k$ such that the incidence graph becomes interval with the addition of at most $k$ edges to each constraint vertex.  Hence, in the case of satisfiability, if this minimum is bounded then polynomial algorithms result.
\item  Finally, we show that the approach in (1) above --- which started us down this path ---   does not work for general CSPs, in that CSP satisfiability is intractable even when the incidence graph has the same favorable structure as in (1), with bounded $k$ (Theorem \ref{hard}). 
\end{enumerate}

\subsection*{Definitions and Background}
Since we mostly deal with satisfiability, we denote our bipartite graphs as $G=(\cla, \var, E)$, where $\cla$ stands for clauses and $\var$ for variables.
\begin{definition}
A bipartite graph $G=(\cla, \var, E)$ is an {\em interval bigraph} if every vertex can be assigned an interval on the real line such that for all $x \in \var$ and $c \in \cla$ we have $xc \in E$ if and only if the corresponding intervals intersect.  
A Boolean formula in conjunctive normal form (CNF) is called an {\em interval CNF formula} if the corresponding incidence graph ($\cla$ the clauses, $\var$ the variables, $E$ the incidences) is an interval bigraph.
\end{definition} 

A most interesting alternative characterization of interval bigraphs by Hell and Huang \cite{HellHuang} is stated here, expressed in terms of interval CNF formulas.
\begin{lemma}\label{lemma}\cite{HellHuang}
A CNF formula is an interval CNF formula if and only if its variables and clauses can be totally ordered (indicated by $<$) such that for any variable $x$ appearing in a clause $C$:
\begin{itemize}
 \item[1.] if $x'$ is a variable and $x < x' < C$ then $x'$ also appears in $C$, and
 \item[2.] if $C'$ is a clause and $C < C' < x$ then $x$ also appears in $C'$.
\end{itemize}
\end{lemma}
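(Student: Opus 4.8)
The plan is to prove both implications of the biconditional, using in each direction the convention that a vertex's position in the linear order is the \emph{high} (right) endpoint of its interval. Before doing either direction it pays to rephrase conditions 1 and 2 in a single symmetric form: together they assert that for every vertex $v$, whenever $u$ is a neighbor of $v$ lying below $v$ in the order, every vertex on the \emph{opposite} side of the bipartition that lies strictly between $u$ and $v$ is also a neighbor of $v$. Writing $\pi(v)$ for the position of $v$ and $N^-(v)$ for the set of neighbors of $v$ preceding $v$, and letting $\ell(v)$ be the position of the lowest element of $N^-(v)$ (with $\ell(v)=\pi(v)$ when $N^-(v)=\emptyset$), this says precisely that $N^-(v)$ equals the set of opposite-side vertices whose positions lie in $[\ell(v),\pi(v))$. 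Condition 1 is the instance $v\in\cla$ (the intervening vertices are variables) and condition 2 the instance $v\in\var$ (they are clauses). This reformulation is what makes both directions routine.

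\textbf{Forward direction (interval CNF formula $\Rightarrow$ order).} Given an interval representation, write $I_v=[l_v,r_v]$, perturb the endpoints so that all $2(|\cla|+|\var|)$ of them are distinct, and order the vertices by their right endpoints. For an edge $xC$ with $x<C$ we have $r_x<r_C$, and since $I_x\cap I_C\neq\emptyset$ the binding inequality is $l_C\le r_x$. Then for any variable $x'$ with $x<x'<C$ we get $l_C\le r_x<r_{x'}<r_C$, so $l_{x'}\le r_{x'}<r_C$ and $l_C\le r_{x'}$, whence $I_{x'}\cap I_C\neq\emptyset$ and $x'C\in E$; this is condition 1. The case $C<x$ is symmetric, using $l_x\le r_C$ and $r_C<r_{C'}<r_x$ for an intervening clause $C'$, and yields condition 2.

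\textbf{Backward direction (order $\Rightarrow$ interval CNF formula).} Given an order satisfying 1--2, assign to each vertex $v$ the closed interval $I_v=[\ell(v),\pi(v)]$. I claim $xC\in E$ iff $I_x\cap I_C\neq\emptyset$. If $xC\in E$ with, say, $x<C$, then $x\in N^-(C)$, so by the reformulation $\ell(C)\le\pi(x)<\pi(C)$ and hence $\pi(x)\in I_x\cap I_C$. Conversely, if the intervals meet and $\pi(x)<\pi(C)$, then $\ell(C)\le\pi(x)$, so $x$ is an opposite-side (variable) vertex with $\pi(x)\in[\ell(C),\pi(C))$, and the reformulation forces $x\in N(C)$. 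A non-edge with $x<C$ yields $\pi(x)<\ell(C)$, so $I_x$ and $I_C$ are disjoint; the cases $C<x$ are symmetric via $\ell(x)$ and condition 2.

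The substantive content is in the reformulation; the remaining obstacle is purely bookkeeping, and I would be careful to get it exactly right rather than merely morally right. First, the contiguity guaranteed by conditions 1--2 is contiguity \emph{within one side} of the bipartition, so the position range $[\ell(v),\pi(v))$ must always be read as ranging over opposite-side vertices only, the interleaved same-side vertices being irrelevant since $G$ has no edges within a side. Second, one must track strict versus non-strict inequalities among the closed-interval endpoints so that adjacency matches intersection on the nose; the perturbation in the forward direction and the integer positions in the backward direction exist precisely to keep this clean. With these conventions fixed, both directions collapse to the one-line overlap computations above.
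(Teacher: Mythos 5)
Your proof is correct. Note, however, that the paper does not prove this lemma at all: it is imported verbatim from Hell and Huang \cite{HellHuang} as a known characterization of interval bigraphs, so there is no in-paper argument to compare against. Your argument is the standard self-contained one and both directions check out: the symmetric reformulation (conditions 1 and 2 together say that $N^-(v)$ is exactly the set of opposite-side vertices with positions in $[\ell(v),\pi(v))$) is the right way to package the two conditions, the right-endpoint ordering handles the forward direction, and the intervals $[\ell(v),\pi(v)]$ handle the converse, including the degenerate case $N^-(v)=\emptyset$ where $I_v$ collapses to a point and correctly meets nothing below it. The one step I would insist you spell out is that the perturbation in the forward direction must \emph{preserve the intersection pattern}: if $r_x=l_C$ and the intervals meet only at that point, a careless perturbation destroys the edge, so you need to perturb with all left endpoints at a given value placed below all right endpoints at that value. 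In fact you can avoid the perturbation entirely: ordering by right endpoints with ties broken arbitrarily already suffices, since the chain $l_C\le r_x\le r_{x'}\le r_C$ goes through with non-strict inequalities.
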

We call an ordering of the variables and clauses of an interval CNF formula satisfying the lemma an {\em interval ordering}.  Interval bigraphs can be recognized in polynomial time \cite{DBLP:journals/dam/Muller97}, see also \cite{DBLP:journals/corr/abs-1211-2662}.

\section{k-interval Bigraphs} 
Recent work has articulated efficient algorithms in the dynamic programming style for interval CNF formulae.

\begin{theorem}\label{stv} \cite{STV}
Given an interval CNF formula on $n$ variables and $m$ clauses and an interval ordering of it, $\#$SAT and weighted {\sc MaxSAT} can be solved in time $\bigoh(m^3(m+n))$.
\end{theorem}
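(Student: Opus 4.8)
The plan is to process the vertices in an interval ordering (which exists by Lemma~\ref{lemma}, e.g.\ by sorting all vertices by the right endpoint of their interval) from left to right, maintaining a dynamic programming table indexed by the position of a ``cut'' in this order. For a cut that has processed a prefix $P$ of the ordering, I classify each clause as \emph{closed} (all its variables lie in $P$), \emph{open} (some but not all of its variables lie in $P$), or \emph{untouched}. A partial assignment $\tau$ of the variables in $P$ then determines which closed and which open clauses it already satisfies. For $\#$SAT the table entry stores, for each relevant ``state'', the number of partial assignments $\tau$ that satisfy all closed clauses and induce that state on the open clauses; for weighted {\sc MaxSAT} it stores instead the maximum total weight of closed clauses satisfiable while inducing that state. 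Summing (resp.\ maximizing) over the terminal cut, where every clause is closed, yields the answer.

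The engine of the argument is the structure that Lemma~\ref{lemma} forces on the open clauses at any cut. Order the open clauses $C^{(1)} < \dots < C^{(t)}$ by position. Using condition~2 of the lemma, the set of \emph{future} variable-neighbours of an open clause (variables not yet in $P$ whose interval reaches back to the clause) are nested along the order: $N^+(C^{(1)}) \subseteq \dots \subseteq N^+(C^{(t)})$. Two consequences drive the DP. First, open clauses close in their positional order, so the ``frontier'' of still-open clauses is always a contiguous block of the ordering. Second, each not-yet-processed variable, once assigned, can satisfy only a \emph{suffix} of the currently open clauses --- those it is adjacent to --- and among those only the ones in which its polarity matches its value.

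The main obstacle is that the obvious choice of state --- the exact subset of open clauses already satisfied by $\tau$ --- can be exponentially large: the formula $\bigwedge_{i=1}^{t} (v_i \vee y)$, where each $v_i$ is private to $C_i$ and $y$ occurs in every clause, is an interval CNF formula (one checks that the order $v_1\, C_1\, \dots\, v_t\, C_t\, y$ satisfies Lemma~\ref{lemma}) in which all $2^t$ subsets are realizable. The key step is therefore to collapse partial assignments into polynomially many equivalence classes according to their \emph{residual requirement} on future variables: $\tau$ and $\tau'$ are equivalent when, for every completion of the remaining variables, they yield the same outcome (for $\#$SAT, the same number of completions to a satisfying assignment; for {\sc MaxSAT}, the same best attainable completion weight). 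Exploiting the nesting and suffix structure above, one shows that the set of unsatisfied open clauses matters only through $\bigoh(m^2)$ distinct such classes --- essentially because an unsatisfied open clause constrains future variables only through a positional threshold and a polarity, so the relevant information is a bounded collection of threshold-and-polarity summaries rather than an arbitrary subset. Proving this polynomial bound, and checking that it is preserved by the transitions, is where the interval structure is indispensable and is the hardest part of the proof.

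With the state space pinned down, the transitions are routine. At a variable vertex I branch on the two truth values, update each state by the suffix of open clauses the variable satisfies (respecting polarity), and add counts (for $\#$SAT) or take the maximum weight (for {\sc MaxSAT}). At a clause vertex I either open the clause, initialising its status from the already-processed variables it contains, or --- at the position where its last variable has just been read --- close it, discarding branches in which it is still unsatisfied (for $\#$SAT) or forgoing its weight (for {\sc MaxSAT}). There are $\bigoh(m+n)$ cuts, $\bigoh(m^2)$ states per cut, and $\bigoh(m)$ work per transition to recompute the affected suffix, for the claimed $\bigoh(m^3(m+n))$ total running time.
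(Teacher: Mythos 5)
Your overall strategy---a left-to-right sweep over the interval ordering, with dynamic-programming states given by equivalence classes of partial assignments at each cut---is exactly the route taken by the cited source: the paper does not reprove this theorem but derives it from Theorem~\ref{theorem:SolvingSATonLinearBranchDec} together with Lemma~\ref{int-cla}, i.e.\ from the fact that an interval ordering has linear $\mathtt{ps}$-width at most $m+1$, after which the generic $\bigoh(p^2m(m+n))$ cut-based DP applies. Your identification of the obstacle (the exact satisfied-subset of open clauses can range over exponentially many sets, as your $\bigwedge_i(v_i\vee y)$ example correctly shows) is also on target.

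The gap is that the one claim carrying the entire theorem---that the partial assignments collapse into $\bigoh(m^2)$ equivalence classes per cut---is asserted rather than proved, and the justification you offer for it does not work. You say an unsatisfied open clause ``constrains future variables only through a positional threshold and a polarity.'' That is false in general: condition~2 of Lemma~\ref{lemma} makes the \emph{future neighbourhoods} of the open clauses nested, but it says nothing about polarities, so a single open clause $C$ may contain its future variables with arbitrary mixed signs. Its residual requirement is therefore a full clause over $N^+(C)$, not a threshold plus one sign, and two unsatisfied open clauses with the same threshold can impose entirely different requirements; your ``threshold-and-polarity summaries'' discard exactly the information the DP needs. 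The correct collapse is subtler: one bounds the number of \emph{projection satisfiable} sets of the two subformulas crossing the cut (the sets $\sat(F_1(i),\sigma)$ and $\sat(F_2(i),\tau)$), shows this is at most $m+1$ for an interval ordering (Lemma~\ref{int-cla}), and indexes the DP by these sets rather than by per-clause summaries, which is also what makes the transitions well-defined---a point you flag (``checking that it is preserved by the transitions'') but likewise leave open. Since you explicitly defer ``the hardest part of the proof,'' what remains is a correct plan with its central lemma missing.
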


Combining Theorem \ref{stv} with the recognition algorithm of \cite{DBLP:journals/dam/Muller97} gives the following:

\begin{corollary}
Given a CNF formula, it can be decided if it is an an interval CNF formula, and if so $\#$SAT and weighted {\sc MaxSAT} can be solved in polynomial time.
\end{corollary}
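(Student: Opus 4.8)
The plan is to chain together the two ingredients already available in the excerpt: the polynomial-time recognition algorithm for interval bigraphs of Müller \cite{DBLP:journals/dam/Muller97} and the dynamic-programming solver of Theorem \ref{stv}. First I would form the incidence graph $G=(\cla,\var,E)$ of the given CNF formula and run Müller's algorithm on it. If the algorithm reports that $G$ is not an interval bigraph, then by definition the formula is not an interval CNF formula and we output this and halt.

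If instead $G$ is recognized as an interval bigraph, the next step is to obtain not merely a yes-answer but an \emph{interval ordering} in the sense of Lemma \ref{lemma}, since this ordering is precisely what Theorem \ref{stv} takes as input. Here I would rely on the fact that the recognition algorithm produces (or can be made to produce) an explicit interval representation, i.e.\ an assignment of intervals to the vertices realizing the adjacencies. From such a representation the interval ordering is read off directly: sort all vertices by the right endpoint of their assigned intervals, breaking ties consistently. By the Hell--Huang characterization underlying Lemma \ref{lemma}, the neighborhood of each vertex then coincides with an interval of this order whose high end is the vertex's own position, which is exactly conditions (1) and (2) of the lemma. This sorting costs only $\bigoh((m+n)\log(m+n))$ time.

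With the interval ordering in hand, I would invoke Theorem \ref{stv} to solve $\#$SAT and weighted {\sc MaxSAT} in time $\bigoh(m^3(m+n))$. The total running time is dominated by Müller's recognition step, which is polynomial, so the whole procedure runs in polynomial time as claimed.

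The main obstacle I anticipate is the second step: Theorem \ref{stv} requires an interval ordering as explicit input, whereas a recognition algorithm a priori certifies only membership in the class. The crux is therefore to confirm that Müller's algorithm (or the one in \cite{DBLP:journals/corr/abs-1211-2662}) yields a constructive certificate --- an interval representation or an ordering --- rather than a bare decision. This is standard for interval-type recognition algorithms, but it is the one point that must be checked rather than merely asserted; should a cited algorithm only decide membership, one would instead extract the ordering from the constructive direction of the Hell--Huang characterization.
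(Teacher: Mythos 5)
Your proposal is correct and takes essentially the same route as the paper, which justifies the corollary in a single sentence by combining Theorem~\ref{stv} with M{\"u}ller's recognition algorithm. Your additional care about extracting an explicit interval ordering (the input Theorem~\ref{stv} actually requires) from the recognition algorithm's certificate is a reasonable elaboration of a detail the paper leaves implicit.
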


We want to generalize this result to a larger class of formulae. To this end we introduce the following graph classes and formula classes, parametrized by $k\geq 1$.

\begin{definition}\label{def-kint}
A bipartite graph $G=(\cla, \var, E)$ is a {\em $k$-interval bigraph}
if we can add at most $k$ edges to each vertex in $\cla$ such that the resulting bipartite graph is an interval bigraph.   A CNF formula is called a {\em $k$-interval CNF formula} if its incidence graph (with clause vertices being $\cla$) is a $k$-interval bigraph.
\end{definition}

Note that 0-interval bigraphs are the interval bigraphs, and 1-interval bigraphs allow as many exceptions (added edges) as there are clauses. 
Unfortunately, the recognition problem for $k$-interval bigraphs becomes hard, already when $k=1$.
The proof is by reduction from the strongly NP-hard 3-\textsc{Partition} problem and is given in Section 5.

\begin{theorem}\label{sec}
Given a bipartite graph $G$ and an integer $k$, deciding if $G$ is a $k$-interval bigraph is NP-hard, even when $k=1$.
\end{theorem}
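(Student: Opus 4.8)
The plan is to reduce from the strongly NP-hard 3-\textsc{Partition} problem, as the excerpt announces. Recall that in 3-\textsc{Partition} we are given $3m$ positive integers $a_1,\dots,a_{3m}$ and a bound $B$ with $\sum_i a_i = mB$ and $B/4 < a_i < B/2$ for each $i$, and we ask whether the $a_i$ can be partitioned into $m$ triples each summing exactly to $B$. The size constraint forces every part summing to $B$ to contain exactly three elements, which is the feature we want to exploit when building a bipartite graph whose interval structure encodes the packing.

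First I would set up the target structure. Using the Hell--Huang characterization (Lemma~\ref{lemma}), being a $1$-interval bigraph means there is a total order of clauses and variables, together with one extra edge allowed per clause, that becomes an interval ordering. I would design a gadget consisting of $m$ ``bin'' clauses, each of which, in any valid interval ordering, is forced to ``cover'' a contiguous block of variables of total length related to $B$; the key is to make the number of available variable-slots per bin exactly $B$ (or a fixed linear function of $B$), so that the items packed into a bin must have sizes summing to at most $B$. Each integer $a_i$ would be represented by an ``item'' gadget --- a small bundle of variables and clauses whose internal adjacencies force it to occupy a contiguous interval of length proportional to $a_i$ in any interval ordering, and which can attach to a bin only as a single indivisible block. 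The crucial design goal is that condition~1 of Lemma~\ref{lemma} (every variable strictly between $x$ and its clause $C$ must also lie in $C$) should force the variables assigned to a given bin to form a consecutive run, so that the geometric ``length'' constraint becomes a genuine summation constraint $\sum a_i = B$ per bin.

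The role of the single permitted extra edge per clause is the subtle point of the construction. I would use these extra edges as ``tokens'' that let each item gadget hook onto the bin it is assigned to: without any added edge the graph should fail to be an interval bigraph, but adding exactly one edge to each of the relevant bin/boundary clauses should repair the ordering precisely when a valid $3$-partition exists. Concretely, I expect to show a biconditional: $G$ is a $1$-interval bigraph if and only if the $a_i$ admit a partition into $m$ triples summing to $B$. The forward direction would read an interval ordering off a given partition (placing each triple's items consecutively inside its bin, using the allowed extra edges to stitch items to bins); the reverse direction would argue that any interval ordering with at most one extra edge per clause induces a partition of the items into groups of total length exactly $B$, and by the $B/4 < a_i < B/2$ bound each such group has exactly three items.

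The main obstacle, and where I would spend the most care, is the reverse (soundness) direction: proving that the budget of one extra edge per clause is too tight to ``cheat'' the packing. I must rule out pathological interval orderings that interleave items across bins, split an item's variables between two bins, or exploit the added edges to create contiguous neighborhoods that do not correspond to a clean assignment. This is where conditions~1 and~2 of Lemma~\ref{lemma} must be leveraged quantitatively: I would argue that each item gadget is rigid (its variables are forced consecutive regardless of the ordering), that distinct bins occupy disjoint ordered ranges, and that a single added edge per clause cannot merge or overflow bins. Establishing this rigidity --- essentially a collection of forcing lemmas about what interval orderings of the gadgets can look like --- is the technical heart of the reduction, after which the counting argument (total length $mB$ split into $m$ runs of length $\le B$ forces each to be exactly $B$, hence exactly three items) is routine. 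Since $3$-\textsc{Partition} is strongly NP-hard, I would keep all gadget sizes polynomial in the unary encoding of the $a_i$, so that the reduction runs in polynomial time and the NP-hardness for $k=1$ follows.
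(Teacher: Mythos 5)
Your high-level plan matches the paper's actual proof: the paper also reduces from strongly NP-hard 3-\textsc{Partition}, builds $n$ ``bins'' each offering exactly $b$ unit slots, represents each element $a$ by a rigid ``numeral'' gadget of length $s(a)$ that must be interspersed into a single bin, and uses delimiter and anchor gadgets to prevent straddling and escape. So the reduction source and the bin/item architecture are the right call. However, what you have written is a roadmap rather than a proof, and the one place where you do commit to a mechanism is the place where you diverge from what actually works. You describe the allowance of one extra edge per clause as a ``token'' that lets an item \emph{hook onto} its bin. In the paper's construction the added edges are not attachment devices; they are the unavoidable geometric cost of interleaving. Each bin is a path $(s_{i,1},\ell_{i,1},s_{i,2},\dots,s_{i,b+1})$ together with a global \emph{track} vertex $t$ adjacent to all slot vertices and all numeral vertices. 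Squeezing one numeral interval into the gap between $s_{i,j}$ and $s_{i,j+1}$ forces it to overlap the connector $\ell_{i,j}$ (a non-neighbor), and symmetrically each internal connector of the numeral path overlaps one slot vertex --- so every vertex on the connector side accrues \emph{exactly one} forbidden overlap, which is precisely the $k=1$ budget. This ``one foreign interval fits in each gap, at the cost of one added edge to each of the two connectors involved'' accounting is the quantitative heart of the reduction, and your proposal does not identify it; without it there is no reason the budget of one edge per clause should coincide with unit-capacity slots.

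The second gap is the soundness direction, which you correctly flag as the technical heart but then defer entirely. The paper discharges it with several concrete design choices you would need to reinvent: the track vertex $t$ forces all slots and numerals onto a single line segment; left and right anchor vertices (whose connectors are \emph{not} adjacent to $t$) pin down the ends so numerals cannot sit outside the slot structure; the delimiter gadgets between consecutive bins prevent a numeral from straddling two bins; and the assumption $b\ge 4$ (hence $s(a)>1$) prevents a length-one numeral from hiding at a bin boundary. A counting argument ($nb$ total numeral length versus $nb$ total gaps, with $b/4 < s(a) < b/2$ forcing triples) then finishes as you say. Until you specify gadgets realizing all of these forcing properties and verify the exactly-one-overlap accounting on both sides of the bipartition, the reduction is not established.
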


The alternative characterization of Lemma \ref{lemma} can be extended to $k$-interval bigraphs.

\begin{lemma}\label{key}
 A CNF formula is a  $k$-interval CNF formula if and only if its variables and clauses can be totally ordered such that for any clause $C$ there are at most $k$ variables $x$ not appearing in $C$ where either
\begin{itemize}
	\item[1.] a variable $x'$ appears in $C$ with $x' < x < C$, or
	\item[2.] $x$ appears in a clause $C'$ with $C' < C < x$.
\end{itemize}
\end{lemma}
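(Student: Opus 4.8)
The plan is to reduce both directions of the equivalence to the Hell--Huang characterization of Lemma~\ref{lemma}, via the observation that the two numbered conditions of Lemma~\ref{key} describe exactly the edges that an interval ordering is \emph{forced} to add at a clause $C$. Fix a total order $<$ on variables and clauses, and for a clause $C$ call a variable $x$ not appearing in $C$ \emph{bad for $C$} if it satisfies condition~1 or~2 of the lemma. The crux is the following correspondence: with respect to a fixed order, the variables bad for $C$ are precisely those to which $C$ must be joined for the order to become an interval ordering of the enriched formula.

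For the forward direction, suppose the formula is $k$-interval. By Definition~\ref{def-kint} we may add at most $k$ edges at each clause vertex to obtain an interval CNF formula $F'$, which by Lemma~\ref{lemma} has an interval ordering $<$; I would reuse this order for the original formula $F$. I then argue that each variable bad for $C$ in $F$ is the endpoint of an edge added at $C$. If $x$ is bad via condition~1, some variable $x'$ appears in $C$ with $x' < x < C$; since every edge of $F$ survives in $F'$, condition~1 of Lemma~\ref{lemma} applied in $F'$ forces $x$ to appear in $C$ in $F'$, yet $x$ does not appear in $C$ in $F$, so $xC$ was added. Condition~2 is symmetric, using a clause $C'$ with $C' < C < x$ and condition~2 of Lemma~\ref{lemma}. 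Thus $C$ has at most as many bad variables as added edges, namely at most $k$.

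For the reverse direction, suppose an order is given in which every clause has at most $k$ bad variables. I would build $F'$ by adding, at each clause $C$, exactly the edges $xC$ for all $x$ bad for $C$; this adds at most $k$ edges per clause, so it remains to show that this same order is an interval ordering of $F'$ in the sense of Lemma~\ref{lemma}. Consider a variable $x$ appearing in $C$ in $F'$: either it appeared in $C$ already in $F$, or it was added, in which case condition~1 forces $x < C$ while condition~2 forces $C < x$. To verify condition~1 of Lemma~\ref{lemma}, take $x < x'' < C$; in either case there is a variable appearing in $C$ in $F$ and lying to the left of $x''$ (namely $x$ itself, or the witness $x'$ from condition~1), so $x''$ is already present in $C$ or bad for $C$, hence appears in $C$ in $F'$. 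Condition~2 of Lemma~\ref{lemma} is checked symmetrically using a clause strictly between $C$ and $x$.

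I expect the main obstacle to be exactly this last verification: confirming that adding precisely the forced edges introduces no new Hell--Huang violation. The subtlety is that a newly added edge $xC$ may itself serve as the ``witness'' forcing a further edge, so the argument must show that the set of added edges is closed under the demands of the two conditions. This closure reduces to transitivity of the order --- the chains $x' < x < x'' < C$ for condition~1 and $C' < C < C'' < x$ for condition~2 --- together with the fact that, for a freshly added edge, one of the two conditions is vacuous because it would require $x$ to lie on the wrong side of $C$. Once this closure is in hand, both directions follow directly from Lemma~\ref{lemma}.
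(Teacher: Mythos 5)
Your proof is correct and follows the same route as the paper, whose entire proof is the one-line assertion that the lemma ``follows directly from Definition~\ref{def-kint} and Lemma~\ref{lemma}''; you have simply filled in the details of that reduction. In particular, your closure verification for the reverse direction (that adding exactly the forced edges creates no new Hell--Huang violation, because the witnesses chain transitively and one of the two conditions is vacuous for each freshly added edge depending on which side of $C$ the variable lies) is exactly the content the paper leaves implicit, and it checks out.
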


\begin{proof}
The lemma follows directly from Definition \ref{def-kint} 
and Lemma \ref{lemma}. 
\end{proof}

\begin{definition}
For a $k$-interval CNF formula we call a total ordering of the kind guaranteed by Lemma \ref{key} a $k$-interval ordering.
\end{definition}

Our basic algorithmic result is that, given a $k$-interval ordering of a $k$-interval CNF formula, \#SAT and {\sc MaxSAT} can be solved via a fixed-parameter tractable (FPT, see \cite{DowneyF}) algorithm parameterized by $k$.

\begin{theorem}\label{first}
Given a CNF formula and a $k$-interval ordering of it, we solve \#SAT and weighted {\sc MaxSAT} in time $\bigoh(m^34^k(m+n))$.
\end{theorem}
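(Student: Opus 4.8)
The plan is to build on the dynamic program behind Theorem \ref{stv}, running it not on the given formula $F$ but on the interval CNF formula that the $k$-interval ordering secretly certifies. By Lemma \ref{key} (equivalently Definition \ref{def-kint} together with Lemma \ref{lemma}), the supplied ordering $\pi$ is an honest interval ordering once, for every clause $C$, we add its at most $k$ ``missing'' incidences --- the variables $x\notin C$ witnessing condition 1 or 2 of Lemma \ref{key}. Call the resulting interval bigraph $G'$ and record, for each clause, the set $X_C$ of at most $k$ exceptional variables attached to it. The crucial point is that $\pi$ is an interval ordering of $G'$, so the whole machinery of Theorem \ref{stv} applies to $G'$ verbatim --- provided we treat the exceptional incidences as \emph{inert}, i.e.\ present for the purpose of the interval structure but never able to satisfy their clause.

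First I would re-examine the sweep of \cite{STV} over $\pi$. Its efficiency rests on the interval property: at each cut, the open clauses and their partial-satisfaction information can be aggregated into a state of size polynomial in $m$, and the sweep touches $m+n$ positions, yielding $\bigoh(m^3(m+n))$. I would keep exactly this sweep on $G'$, changing only the local rule that decides when a clause counts as satisfied: a clause $C$ is satisfied by the current assignment iff some literal of $C$ \emph{in the original formula} is satisfied, so that setting an exceptional variable $x\in X_C$ never helps $C$. Because the $X_C$ variables still occupy the same interval as the genuine variables of $C$, every structural invariant that bounds the state of the STV program is left untouched; what changes is only the satisfaction bookkeeping local to each $C$.

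Next I would quantify the overhead of this bookkeeping. The difficulty is that an exceptional variable must be barred from satisfying one clause while possibly being a genuine, satisfying literal of another, so it cannot simply be deleted. My plan is to enrich each state component associated with an open clause $C$ by the effect of its exceptional variables that currently straddle the cut: each such variable contributes only a constant number of possibilities (its truth value, and whether it has already been processed relative to $C$), so the enriched per-clause state is a factor $4^{|X_C|}\le 4^k$ larger and the transitions cost correspondingly more. Since $|X_C|\le k$ for every clause and the exceptional sets of distinct clauses enter the boundary independently, these factors multiply into, rather than compound beyond, a single $4^k$, turning $\bigoh(m^3(m+n))$ into $\bigoh(m^3 4^k(m+n))$.

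Finally, correctness and the two variants follow along the lines of \cite{STV}: the modified sweep counts precisely the assignments under which every clause is satisfied by one of its \emph{true} literals, which are exactly the satisfying assignments of the original $F$; summing the same per-assignment weights or maximizing them, as the STV program already does for \#SAT and weighted {\sc MaxSAT}, extends the result to those settings. I expect the main obstacle to be the middle step --- verifying rigorously that the inert-variable annotations can be folded into the STV state without disturbing the interval-based argument that keeps the state polynomial, and that the blow-up is exactly $4^k$ (two bits per exceptional variable) rather than something larger; pinning down how the $X_C$ of clauses open at a common cut interact with the aggregated satisfaction frontier is the delicate part.
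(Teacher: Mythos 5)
Your overall strategy---simulate an interval instance obtained by adding the at most $k$ missing incidences per clause, and charge the exceptions to an extra $4^k$ factor in the sweep of \cite{STV}---is in the right spirit, but the step you yourself flag as delicate is exactly where the argument breaks, and it is not a mere technicality. You enrich ``each state component associated with an open clause $C$'' by a factor $4^{|X_C|}$ and then assert that, because the exceptional sets of distinct clauses enter the boundary independently, these factors ``multiply into, rather than compound beyond, a single $4^k$.'' Independence is precisely the situation in which per-clause factors compound multiplicatively across the (up to $m$) clauses open at a cut, giving a state space of order $4^{km}$, not $4^k$. Nothing in your write-up rules this out, and no local, per-clause bookkeeping can: what is needed is a \emph{global} bound on the number of distinct aggregated states at a cut.

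The paper supplies exactly that global bound and otherwise leaves the algorithm of \cite{STV} untouched. It invokes the $\mathtt{ps}$-width-parameterized version (Theorem \ref{theorem:SolvingSATonLinearBranchDec}, running time $\bigoh(p^2m(m+n))$) on the \emph{original} formula $F$ with the given ordering $\pi$, and proves (Lemma \ref{second}) that $\pi$ has $\mathtt{ps}$-width at most $m2^k+1$. The proof expands each clause $C$ into $2^{k'}$ clauses over all sign patterns of its $k'\le k$ added variables, observes the result $F'$ is an interval CNF formula with at most $m2^k$ clauses (so its interval ordering has $\mathtt{ps}$-width at most $m2^k+1$ by Lemma \ref{int-cla}), and then shows by an injectivity argument that distinct projection-satisfiable sets of $F$ across a cut remain distinct in $F'$. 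Note also that the $4^k$ in the final bound is $(2^k)^2$ coming from the quadratic dependence on $p=m2^k+1$, not ``two bits per exceptional variable'' as you conjecture. Your inert-literal encoding could likely be made to work, but only by proving the analogue of this global $m2^k$ bound on the number of distinct satisfaction profiles at a cut; without it, the claimed running time is unsupported.
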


\begin{proof}
The full proof for \#SAT and weighted {\sc MaxSAT} is given in Section 4;  here we give a straightforward construction establishing a weaker result for satisfiability only.
  
The basic observation is that the satisfiability of a CNF formula is not affected if a clause $C$ is replaced by a particular set of clauses, defined next.  Fix any set of $\ell\geq 0$ variables not occurring in $C$, and replace $C$ with the $2^{\ell}$ clauses of the form $(C\lor D_j): j=1,\ldots,2^{\ell}$, where $D_j$ ranges over the $2^{\ell}$ possible clauses containing the $\ell$ variables.  It is easy to see that a truth assignment satisfies the new formula if and only if it satisfied the original one.  It is further clear that the satisfiability of the formula remains unaffected if all clauses are so replaced, for different sets of variables and $\ell \geq 0$.  Finally, if a CNF formula is $k$-interval, then it has such an equivalent variant whose incidence graph is an interval bigraph.  An FPT algorithm (albeit with running time $\bigoh(m^3 8^k(m 2^k +n))$ instead of $\bigoh(m^34^k(m+n))$) results.
\end{proof}

We next show that the $k$-interval structure is not helpful for general CSPs:
\begin{theorem}\label{hard}
 Given a CSP instance $I$ with variable-constraint incidence graph $G$ and an interval bigraph $G'$ obtained from $G$ by adding at most
 $k$ edges to each constraint vertex, deciding the satisfiability of $I$ is $W[1]$-hard parameterized by $k$. 
\end{theorem}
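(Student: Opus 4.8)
The plan is to give a parameterized reduction from a problem that is $W[1]$-hard when parameterized by its solution size while being expressible as a CSP on very few variables, namely \textsc{Multicolored Clique} (equivalently, \textsc{Clique} parameterized by the clique size, which is $W[1]$-hard \cite{DowneyF}). The guiding observation is that Theorem~\ref{first} exploited the Boolean domain heavily: there each clause could be ``completed'' into an interval bigraph at the cost of only $2^k$ Boolean patterns. For CSPs the domain is unbounded, and I intend to show that this single feature already makes the near-interval structure worthless, by packing an entire hard instance into a formula whose incidence graph has only $\bigoh(k)$ variables.

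Concretely, given a graph $H$ whose vertex set is partitioned into color classes $V_1,\dots,V_k$, I would build a CSP $I$ with exactly $k$ variables $x_1,\dots,x_k$, where $x_i$ has domain $V_i$ and selects a vertex of color $i$. For every pair $i<j$ I add a constraint $c_{ij}$ whose relation is satisfied precisely when the selected vertices $x_i$ and $x_j$ are adjacent in $H$. A satisfying assignment then picks one vertex per class so that every pair is an edge, i.e.\ a multicolored clique, and conversely; so $I$ is satisfiable iff $H$ has a multicolored clique. This step is routine and is where I expect no difficulty.

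The structural heart of the argument is to certify that the incidence graph $G$ of $I$ is $k$-interval. Here $G$ has only $k$ variable vertices $x_1,\dots,x_k$ and $k(k-1)/2$ constraint vertices, each constraint $c_{ij}$ having degree exactly $2$. I would take as the witness $G'$ the complete bipartite graph between all constraints and all $k$ variables. The key point is that $G'$ is trivially an interval bigraph: assigning every vertex the interval $[0,1]$ makes all cross pairs intersect, or equivalently the ordering $x_1<\dots<x_k<c_{12}<\dots$ trivially satisfies the Hell--Huang conditions of Lemma~\ref{lemma}. Obtaining $G'$ from $G$ adds at most $k-2\le k$ edges to each constraint vertex, so $G$ is a $k$-interval bigraph with the required witness. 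Since the added-edge budget $k-2$ is bounded by the clique size and the whole construction is polynomial, this is a valid parameterized reduction and $W[1]$-hardness transfers.

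The step I expect to be the real conceptual obstacle --- rather than a calculation --- is recognizing that the near-interval witness should be taken as dense (the complete bipartite graph) rather than as a sparse, genuinely ``linear'' layout. It is tempting to try to arrange the $x_i$ on a line so that each constraint sees only a short interval, but with the pairwise constraints $c_{ij}$ this forces $\Omega(k)$ faults per constraint regardless, and tracking that bound is error-prone. Collapsing all intervals to a single point sidesteps this entirely and isolates the true source of hardness: an unbounded CSP domain lets a formula on $\bigoh(k)$ variables encode an arbitrary $k$-clique instance, so the $k$-interval promise --- which rescued Boolean satisfiability in Theorem~\ref{first} --- carries no information once the domain is unbounded.
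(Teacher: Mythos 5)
Your proposal is correct and follows essentially the same route as the paper: both arguments hinge on the observation that a CSP with only (a function of) $k$ variables has an incidence graph that becomes a complete bipartite graph --- trivially an interval bigraph --- after adding at most $k$ edges per constraint vertex. The only difference is that the paper simply cites the known $W[1]$-hardness of CSP parameterized by the number of variables \cite{PapadimitriouY99}, whereas you re-derive that fact inline via a standard reduction from \textsc{Multicolored Clique}; this is a presentational rather than a substantive divergence.
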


\begin{proof}
 It is known that CSP is $W[1]$-hard parameterized by the number of variables \cite{PapadimitriouY99}.
 Given a CSP instance with $k$ variables, we can turn its incidence graph into an interval bigraph by adding all possible edges between variables and constraints. This creates a complete bipartite graph and adds at most $k$ edges to each constraint vertex. 
\end{proof}

\section{Merging Linear Orders}

Theorem  \ref{hard} tells us that our ambition for new algorithmic results based on the concept of k-interval bigraph should be limited to CSPs of the satisfiability kind, while Theorem  \ref{sec} suggests that the new concept of k-interval bigraph can only extend the class of solvable problems either in special cases, or indirectly, in specific contexts.  In this section we derive an algorithmic result of the latter type.

Suppose that the real life situation modelled by the CNF formula has linearly ordered clauses, and linearly ordered variables, but there is no readily available linear order for both. That is, we assume the input comes with two linear orderings, one for the variables and one for the clauses.  We wish to find the minimum value of $k$ such that there exists a $k$-interval ordering compatible with both. 

\bigskip

\noindent
{\bf Problem}: {\sc  Merging to minimum $k$-interval bigraph ordering}\\
{\bf Input}: Bipartite graph $G=(\cla,\var,E)$, a total order of $\cla$, and a total order of $\var$\\
{\bf Output}: The minimum $k$ such that we can merge the two orders into a $k$-interval ordering of $\cla \cup \var$.

\bigskip
\noindent
Consider first the case $k=0$.

\begin{figure}[tb]
      \center
      \includegraphics[scale=0.3]{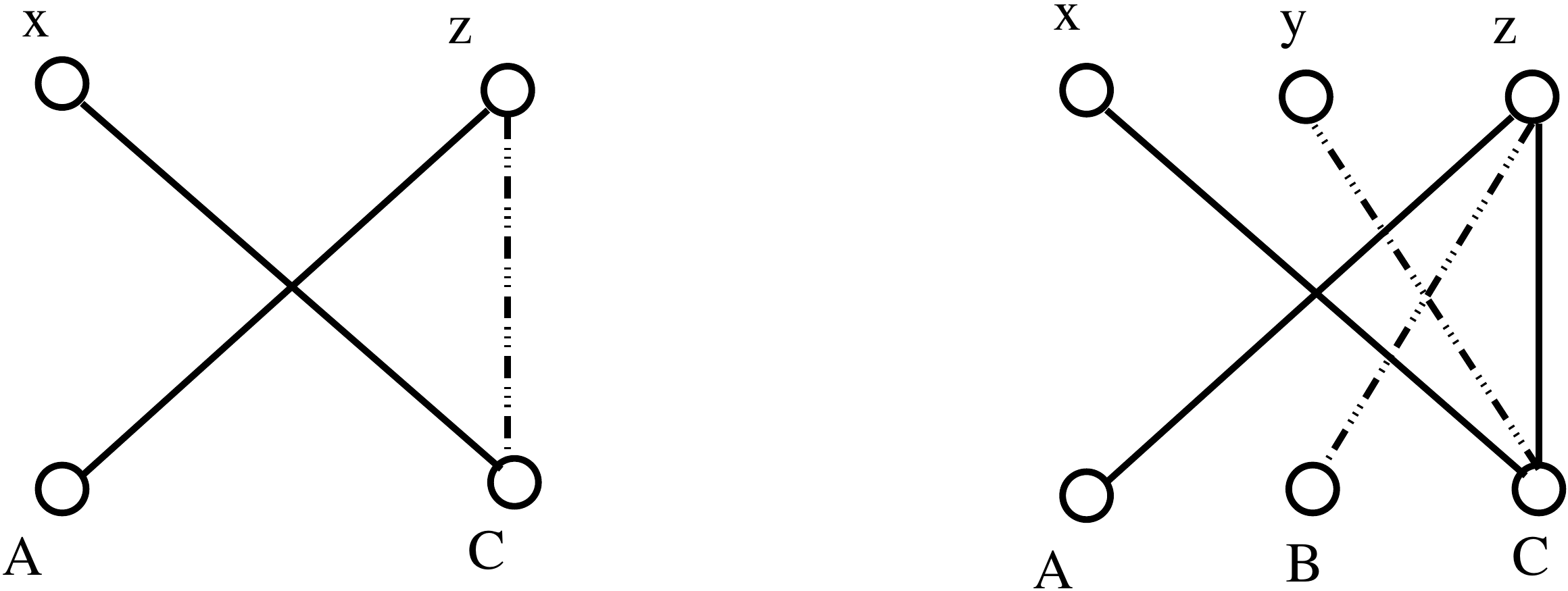}
      \caption{\label{fig:tree}Obstructions to merging into an interval bigraph ordering: variables ordered $x < y < z$, clauses $A < B < C$, with solid lines indicating edges of the incidence graph and dotted lines indicating non-edges, with remaining possibilities being any combination of edges or non-edges.}
\end{figure}

\begin{lemma}\label{cl}
If a formula is given with variable ordering, clause ordering, and incidences containing one of the obstructions in Figure \ref{fig:tree}, then it cannot be merged into an interval bigraph ordering.
\end{lemma}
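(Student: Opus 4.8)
The plan is to prove the contrapositive of the Hell--Huang characterisation at the level of merges: I will show that each obstruction in Figure~\ref{fig:tree} forces a violation of Condition~1 or Condition~2 of Lemma~\ref{lemma} in \emph{every} order that restricts to the given variable order $x<y<z$ and to the given clause order $A<B<C$. Such a merged order is determined by deciding, for each variable--clause pair, whether the variable precedes the clause, monotonically along both chains; since every obstruction lives on at most three variables and three clauses, only finitely many merges are possible, so it suffices to rule them all out.

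First I would isolate the single forcing fact that drives all the cases. Let the merged order be an interval ordering and let $v\in\var$, $D\in\cla$ with $vD\notin E$. If $v<D$ in the merge and some variable $u$ with $uD\in E$ satisfies $u<v$, then $u<v<D$, so Condition~1 applied to $u\in D$ forces $v\in D$, a contradiction; hence $v<D$ is impossible whenever $D$ has a neighbour $u$ earlier than $v$. Dually, if $D<v$ and some clause $D''$ with $vD''\in E$ satisfies $D''<D$, then $D''<D<v$, so Condition~2 applied to $v\in D''$ forces $v\in D$, again a contradiction; hence $D<v$ is impossible whenever $v$ has a neighbouring clause $D''$ earlier than $D$. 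Since in any merge exactly one of $v<D$ or $D<v$ holds, a non-edge $vD$ that is simultaneously flanked by an earlier variable-neighbour of $D$ and an earlier clause-neighbour of $v$ cannot occur in any interval ordering.

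With this in hand the lemma reduces to reading off, for each pictured obstruction, a dotted non-edge $vD$ together with the two solid edges witnessing the flanking configuration: a variable $u<v$ with $u\in D$ (the \emph{earlier variable in the later clause}) and a clause $D''<D$ with $v\in D''$ (the \emph{earlier clause containing the variable}). The prototypical case is the $2\times 2$ cross with $v=y$, $D=B$, $u=x$ and $D''=A$, i.e.\ edges $xB$ and $yA$ with $yB$ a non-edge; the remaining obstructions are the same configuration placed on the three-variable/three-clause template. In each, the forcing fact eliminates both possible relative positions of $v$ and $D$, so no merge can be an interval ordering, which is exactly the claim.

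The part that needs care is exhaustiveness rather than depth. I must check, obstruction by obstruction, that the two flanking solid edges are genuinely present and that the forcing uses \emph{only} those edges and the one dotted non-edge, so that the unlabelled (don't-care) pairs really are irrelevant to the argument; and I must make sure the two halves of the forcing fact are applied with the correct orientation (Condition~1 for the $v<D$ branch, Condition~2 for the $D<v$ branch). Once the flanking edges are identified correctly each case is immediate, so the main obstacle is simply verifying that every obstruction in the figure does exhibit this double-flanked non-edge.
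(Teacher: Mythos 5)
Your two ``half-forcing'' facts are exactly the right tool, and they are precisely what the paper uses: the Condition~1 half says a non-neighbour $v$ of $D$ cannot be placed before $D$ if $D$ has a neighbouring variable below $v$, and the Condition~2 half says $v$ cannot be placed after $D$ if $v$ has a neighbouring clause below $D$. For the first (left-hand) obstruction in Figure~\ref{fig:tree} your reduction works verbatim: it is the double-flanked non-edge $zC$ with flanking edges $xC$ and $zA$, and both relative positions of $z$ and $C$ are killed, just as in the paper.

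The gap is in your claim that \emph{every} obstruction in the figure is ``the same configuration placed on the three-variable/three-clause template.'' The second (right-hand) obstruction is not a double-flanked non-edge: its guaranteed edges are $zA$ and $xC$ and its guaranteed non-edges are $zB$ and $yC$, and neither non-edge is flanked on both sides (nothing forces a neighbour of $B$ below $z$, and nothing forces a clause below $C$ containing $y$). So the check you yourself flag as the risky step would fail, and your combination lemma does not apply. What the paper does instead is combine one half-forcing of each type on two \emph{different} pairs and then chain through the two given orders: the Condition~2 half (via edge $zA$, non-edge $zB$) forces $z$ below $B$, the Condition~1 half (via edge $xC$, non-edge $yC$) forces $y$ above $C$, and then $B<C$ together with $y<z$ yields $z < B < C < y < z$, a contradiction. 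You need to add this second combination pattern; with it, your argument covers both obstructions and coincides with the paper's proof.
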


\begin{proof}
Consider the left-hand obstruction in Figure \ref{fig:tree}. We cannot insert $z$ after $C$, since we get $A <  C < z$ violating Condition 2 in Lemma \ref{lemma}.
On the other hand, we cannot insert $z$ before $C$, since we get $x <  z < C$ violating Condition 1 in Lemma \ref{lemma}. 

Consider the right-hand obstruction in Figure \ref{fig:tree}. We cannot insert $z$ after $B$, since we get $A <  B < z$ violating Condition 2 in Lemma \ref{lemma}.
On the other hand, we cannot insert $y$ before $C$, since we get $x <  y < C$ violating Condition 1 in Lemma \ref{lemma}. Thus, since $B < C$ this leaves no place to insert $z$ without violating Lemma \ref{lemma}.
\end{proof}

It turns out that, if there are no obstructions as in Figure \ref{fig:tree} then {\sc Merging to minimum $k$-interval bigraph ordering} has a solution with $k=0$. Thus, for any instance where the solution has value $k>0$ we can view the task as one of iteratively adding edges until the result has no obstruction as in Figure \ref{fig:tree}. 
On the face of it this is non-trivial, as there is more than one way of fixing an obstruction, with varying edge costs, and some ways may lead to a new obstruction appearing. For an example of this see Figures \ref{fig:obstruct} and \ref{fig:fix-obstruct}.

    
\begin{figure}[tb]
      \center
      \includegraphics[scale=0.3]{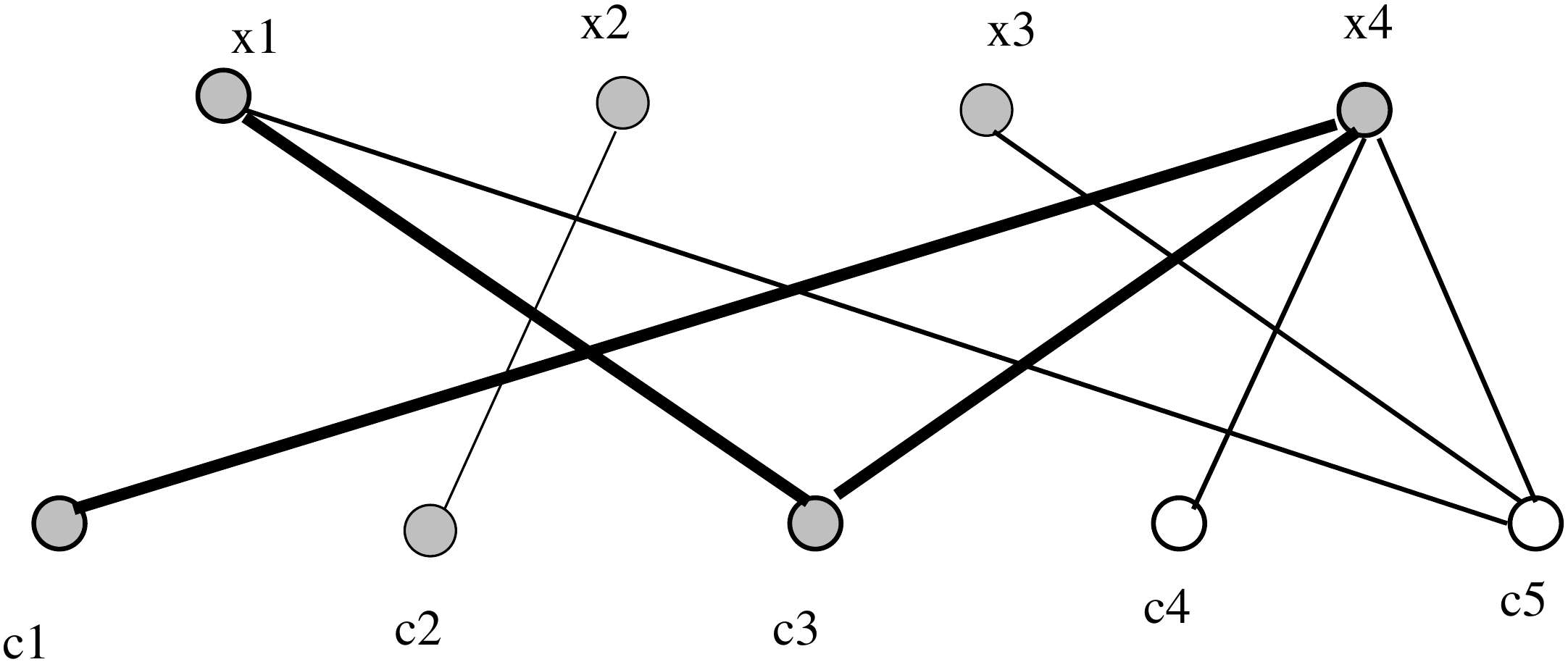}
     \caption{\label{fig:obstruct}Consider the above input, with non-incidences indicated by non-edges. The bold edges and gray nodes show two overlapping obstructions as on the right side of Figure \ref{fig:tree}. Applying Lemma \ref{lemma} these obstructions can be fixed in at least two ways: adding edge $c2x4$ by positioning $c3 < x2$; or adding edges $c3x2$ and $c3x3$ by positioning $c2 > x4$. In this last case a new obstruction appears, see Figure \ref{fig:fix-obstruct}.   
     }
\end{figure}
\begin{figure}[tb]
	 \center
     \includegraphics[scale=0.3]{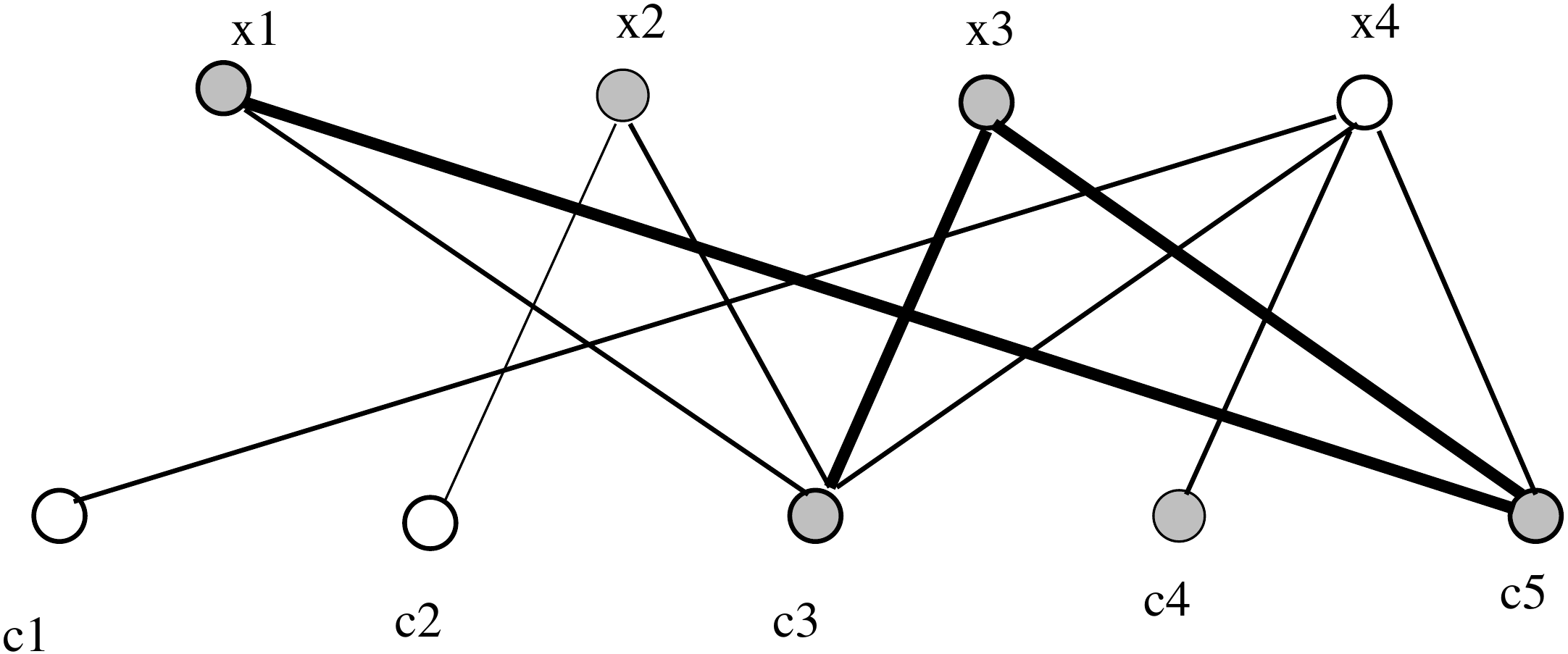}
     \caption{\label{fig:fix-obstruct}Assume we fixed the obstructions from Figure \ref{fig:obstruct} by adding edges $c3x2$ and $c3x3$. We then get a new obstruction based in bold edges and gray nodes. 
     }
\end{figure}
   
   Nevertheless, a greedy approach will efficiently solve {\sc Merging to minimum $k$-interval bigraph ordering}. Let us describe it. Assume the input ordering on variables and clauses is $x_1,...,x_n$ and $c_1,...,c_m$. All orderings we consider will be compatible with these input orderings. The greedy strategy works as follows. Start  with $k=0$ and consider clauses by decreasing index $c_m, c_{m-1},$ etc. Insert $c_i$ among the variables in the highest possible position, below the position of $c_{i+1}$, that does not lead to more than $k$ edges being added to $c_i$. If no such position exists then increase $k$ and start all over again with $c_m$. The correctness of this strategy relies on the following observation.
   
\begin{observation}\label{obs}
For any fixed position of $c_i$ among the variables
the number of edges we must add to clause $c_i$ does not depend on where the other clauses are inserted, as long as $c_1,...,c_{i-1}$ end up below $c_i$ and $c_{i+1},...,c_m$ above $c_i$.
\end{observation}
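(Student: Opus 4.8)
The plan is to pin down exactly which edges incident to $c_i$ are \emph{forced} by Lemma~\ref{lemma}, and to show each such edge is determined solely by the position of $c_i$ relative to the (fixed) variable order, together with the fixed partition of the clauses into those below and those above $c_i$. By Lemma~\ref{lemma}, an edge can be forced onto $c_i$ in only two ways. First, via Condition~1 with $C=c_i$: if $x$ is a neighbour of $c_i$ with $x<c_i$ and $x<x'<c_i$, then $x'$ must also neighbour $c_i$. Second, via Condition~2 with $C'=c_i$: if some variable $x>c_i$ neighbours a clause $C<c_i$, then $x$ must neighbour $c_i$. The remaining ways $c_i$ could match Lemma~\ref{lemma}—as the clause $C$ in Condition~2, or in a variable's slot—either force edges onto \emph{other} clauses or are impossible. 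Observe that the two sources are disjoint: Condition~1 forces only variables below $c_i$, Condition~2 only variables above it.

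First I would dispose of Condition~1. If $c_i$ has no original neighbour below it, Condition~1 forces nothing; otherwise let $x_\ell$ be the lowest-positioned original neighbour of $c_i$ lying below $c_i$. Condition~1 then forces exactly the variables strictly between $x_\ell$ and $c_i$. The point is that there is no cascade: a Condition~1 addition falls in the open interval $(x_\ell,c_i)$ and so cannot create a neighbour below $x_\ell$, while a Condition~2 addition lies above $c_i$; hence neither kind of addition can lower the value $x_\ell$. So this forced set is $\{x' : x_\ell < x' < c_i\}$, which depends only on the position of $c_i$.

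The heart of the argument is Condition~2. I claim the variables it forces onto $c_i$ are exactly $\{x : x>c_i \text{ and } xc_j\in E \text{ for some } j<i\}$, using \emph{original} edges only. The ``if'' direction is immediate, since by hypothesis the clauses below $c_i$ are precisely $c_1,\dots,c_{i-1}$. For ``only if'', suppose $x>c_i$ is forced through a neighbour relation $xC$ with $C<c_i$ in the final augmented graph. If $xC$ is original we are done; otherwise $xC$ was itself added, and since $x>c_i>C$ it cannot have arisen from Condition~1 (which only adds below its clause), so it arose from Condition~2 applied to $C$, meaning $x$ neighbours some clause $C'<C$. Iterating this descent produces strictly decreasing clause positions and hence terminates at an original edge $xc_j$ with $c_j<c_i$. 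This shows that the augmented neighbourhoods of the lower clauses contribute nothing beyond what the original edges already force, so the Condition~2 forced set depends only on the position of $c_i$ and on the fixed set $\{c_1,\dots,c_{i-1}\}$—never on the exact placement of those clauses.

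Combining the two disjoint forced sets, the full set of neighbours $c_i$ must acquire is determined by the position of $c_i$ and the below/above partition alone; subtracting the (placement-independent) set of original neighbours of $c_i$ yields a count of added edges with the same property, which is the statement. The step I expect to be the main obstacle is the descent argument for Condition~2: one must verify that chains of forced edges running through the lower clauses always collapse back to original edges, so that the edges added to $c_1,\dots,c_{i-1}$ cannot inflate the edge count for $c_i$.
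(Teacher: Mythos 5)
Your proof is correct and follows the same decomposition as the paper's: the variables forced onto $c_i$ from below (Condition 1) depend only on $c_i$'s position among the variables, while those forced from above (Condition 2) depend only on the original edges incident to $c_1,\dots,c_{i-1}$, which are precisely the clauses placed below $c_i$. The only substantive difference is that you explicitly prove the no-cascade/descent claim, whereas the paper absorbs this into Lemma~\ref{key}, which already counts the forced variables with respect to the original incidence relation; your extra verification is sound and arguably fills in a step the paper leaves implicit.
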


\begin{proof}
By Lemma \ref{key} we must add to $c_i$ exactly one edge for each variable $x$ not appearing in $c_i$, where $x$ satisfies one of the two conditions stated in Lemma \ref{key}. For the first condition note that $C'$ can be any of $c_1,...,c_{i-1}$ but no other clause. For the second condition note that it does not depend on any other clause, only on the position of $c_i$ among the variables.
\end{proof}

In our greedy strategy, when deciding where to insert $c_i$ the only restriction imposed on us by earlier decisions is that $c_i$ must end up below the position of $c_{i+1}$. To allow the maximum degree of freedom we simply ensure that we have inserted $c_{i+1}$ in the highest possible position. The pseudocode is in Figure \ref{alg1}.

\begin{figure}[h!]
  \center
  \begin{tabular}{|l|}
    \hline
    \textbf{Greedy Algorithm} for merging to minimum $k$-interval bigraph ordering\\
    \hline \hline
    \begin{minipage}{1.0\linewidth}
    \begin{tabbing}
      \=\textbf{output:} \=\kill
      \>\textbf{input:}\>$G=(\cla, \var, E)$, orderings $\cla=c_1,c_2,...,c_m$ and $\var=x_1,x_2,...,x_n$\\
      \>\textbf{output:}\>minimum $k$ such that $\cla$ and $\var$ can be merged into a $k$-interval ordering 
    \end{tabbing}
  \end{minipage}\\\hline \hline
  \begin{minipage}{1.0\linewidth}
    \vspace{2pt}
    \begin{tabbing}
      \=xx\=xx\=xx\=xx\=xx\=\kill
\> \> $q:=-1$;\\
\> \> $success:=false$;\\
\> \> \textbf{while} not $success$\\
\>      \> \> $q:=q+1$;\\
\>      \> \> start with the ordering $x_1,x_2,...,x_n$;\\
\>      \> \> \textbf{for} $i=m$ \textbf{downto} 1\\
\>  \>    \> \> \> insert $c_i$ at the highest position, below $c_{i+1}$, where $EdgesAdded(c_i) \leq q$; \\
\>   \>  \> \> \> \textbf{if} no such position exists for clause $c_i$ then \textbf{break} out of the for loop;\\
\> \> \> \textbf{if} all clauses have been inserted then $success:=true$;\\
      \> \> \textbf{output} $q$;
    \end{tabbing}    
  \end{minipage}\\\hline \hline
  \begin{minipage}{1.0\linewidth}
    \vspace{2pt}
    \begin{tabbing}
      \=xx\=xx\=xx\=xx\=xx\=\kill
      \> \> $EdgesAdded(C)$:= number of variables satisfying one of the conditions of Lemma \ref{key}     
    \end{tabbing}    
  \end{minipage}\\
 \hline
\end{tabular}
\caption{Greedy Algorithm for merging to minimum $k$-interval bigraph ordering} \label{alg1}
\end{figure}

\begin{theorem}\label{new}
 The Greedy Algorithm for { \sc Merging to minimum $k$-interval bigraph ordering} is correct and can be implemented to run in time $\bigoh(|E| \log k)$.
\end{theorem}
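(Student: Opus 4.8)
The plan is to prove the two claims separately. For correctness I would show, for each fixed budget $q$, the equivalence: the two input orders can be merged into a $q$-interval ordering if and only if the inner loop of the Greedy Algorithm inserts all clauses. The heart of the matter is the ``if feasible, greedy succeeds'' direction, and the starting point is Observation \ref{obs}. In any ordering compatible with the inputs the clauses appear as $c_1 < c_2 < \dots < c_m$, so for every $c_i$ the lower-indexed clauses lie below it and the higher-indexed ones above it; hence the number $EdgesAdded(c_i)$ forced by Lemma \ref{key} depends only on the position of $c_i$ among the variables. This lets me \emph{decouple} the problem: writing $\pi(c_i)\in\{0,\dots,n\}$ for the variable-gap holding $c_i$, a compatible ordering is exactly a nondecreasing sequence $\pi(c_1)\le\dots\le\pi(c_m)$, and it is a $q$-interval ordering precisely when $f_i(\pi(c_i))\le q$ for all $i$, where $f_i(p)=EdgesAdded(c_i)$ when $c_i$ sits in gap $p$. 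Feasibility for fixed $q$ thus reduces to the purely combinatorial question of whether a nondecreasing selection $\pi(c_i)\in S_i:=\{p:f_i(p)\le q\}$ exists.

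For this reduced problem the greedy rule ``process $c_m,\dots,c_1$ and give each $c_i$ the highest feasible gap not exceeding the gap chosen for $c_{i+1}$'' is optimal, which I prove by a standard exchange argument. Assuming a valid selection $\pi^{*}$ exists, I show by downward induction on $i$ that the greedy selection $\pi^{g}$ is well defined and satisfies $\pi^{g}(c_i)\ge\pi^{*}(c_i)$. For $i=m$ this holds since $\pi^{g}(c_m)=\max S_m\ge\pi^{*}(c_m)$. For the step, the hypothesis gives $\pi^{*}(c_i)\le\pi^{*}(c_{i+1})\le\pi^{g}(c_{i+1})$, so $\pi^{*}(c_i)$ is an admissible choice for the greedy; hence the greedy does not get stuck at $c_i$ and, taking the highest admissible gap, lands at least as high as $\pi^{*}(c_i)$. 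Thus the inner loop succeeds whenever a compatible $q$-interval ordering exists, giving the equivalence. Since feasibility is monotone in $q$, the smallest $q$ at which the loop succeeds is exactly the sought optimum, which is correctness of the overall algorithm.

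For the running time the $\log k$ factor does not come from the literal linear search $q:=q+1$, but from replacing it by an exponential-then-binary (galloping) search: run the feasibility check for $q=1,2,4,\dots,2^{t}$ until it first succeeds, then binary-search $(2^{t-1},2^{t}]$. By monotonicity in $q$ this performs $\bigoh(\log k)$ checks, where $k$ is the returned optimum, so it remains to implement one feasibility check in $\bigoh(|E|)$ time; placing isolated variables and clauses costs $\bigoh(n+m)$ once, so we may assume $n,m=\bigoh(|E|)$.

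The main obstacle is exactly this $\bigoh(|E|)$ check, because $f_i$ is in general \emph{not} monotone in $p$: as $c_i$ moves up, the Condition~1 contribution of Lemma \ref{key} (absent variables below $c_i$ but above its first variable) only grows while the Condition~2 contribution (absent variables above $c_i$ already occurring in a lower clause) only shrinks, so $S_i$ need not be an interval and a naive per-clause scan could be quadratic. The structural fact that rescues linearity is that the gaps chosen by the greedy are nonincreasing in $i$, so a single candidate pointer sweeps monotonically downward and moves $\bigoh(n)$ times in total over one check. I would maintain $f_i$ incrementally along this sweep: precompute $\mathrm{firstclause}(x_j)$ for every variable in $\bigoh(|E|)$; keep a running count of the variables occurring in some lower clause that currently lie above the pointer, updated in $\bigoh(1)$ whenever the pointer descends past a variable or a variable ceases to be ``lower'' as $i$ decreases (each event occurring $\bigoh(n)$ times overall); and handle the clause-local correction terms, which touch only the variables of $c_i$, in $\bigoh(|c_i|)$ per clause. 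Summed over all clauses these cost $\bigoh(\sum_i|c_i|)=\bigoh(|E|)$, so one check runs in $\bigoh(|E|)$ and the whole algorithm in $\bigoh(|E|\log k)$, as claimed.
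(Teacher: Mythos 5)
Your proposal is correct and follows essentially the same route as the paper's proof: it uses Observation \ref{obs} to decouple the clauses, justifies the greedy rule by the same ``highest possible position'' invariant (which you make more explicit via the downward induction showing $\pi^{g}(c_i)\ge\pi^{*}(c_i)$), obtains the $\log k$ factor by galloping search, and implements each fixed-$q$ feasibility check in $\bigoh(|E|)$ time with a monotonically descending pointer and incrementally maintained counts, exactly as in the paper's second pseudocode. Your explicit observation that the feasible set $S_i$ need not be an interval, and that the single downward sweep nonetheless suffices, is a point the paper leaves implicit but does not change the argument.
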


\begin{proof}
Let us first argue for correctness.
Consider an iteration of the inner loop that successfully found a position for clause $c_i$ among the variables. For the current value of $q$ it is not possible to insert $c_i$ higher than this position without some $c_j$ needing more than $q$ edges added, for some $i \leq j \leq m$. This is in fact a loop invariant, as we inserted the clauses of higher index in the highest possible positions under exactly this constraint, and by Observation \ref{obs} their position does not influence the number of edges added to other clauses.
Similarly, 
if for some $c_i$ and current value of $q$ we encounter 'no such position exists' then in any ordering of $\cla \cup \var$ compatible with the input orders there will be some $c_j, i \leq j \leq m$ which will need more than $k$ edges added. 
Thus, when the algorithm successfully finds positions for all clauses then the current value of $q$ is the correct answer.

Let us now argue for the running time. For the $\log k$ factor, rather than iterating on $q$ until we succeed, we can search for the minimum $k$ by what is known as galloping search, i.e. try $q$ equal to 1, 2, 4, 8, etc until we succeed for an integer $q$, and then do binary search in the interval $[q/2..q]$.
To decide on positions for the clauses in time $\bigoh(|E|)$, for a fixed $q$, we need several program variables. The pseudocode is in Figure \ref{alg2}. 

\begin{figure}[h!]
  \center
  \begin{tabular}{|l|}
  \hline
    Deciding if we can merge to a $q$-interval ordering, for fixed $q$, in $\bigoh(|E|)$ time\\
    \hline 
    \hline
  \begin{minipage}{1.0\linewidth}
    \vspace{2pt}
    \begin{tabbing}
      \=xx\=xx\=xx\=xx\=xx\=xxxx\=xx\=xx\=xx\=xx\=xx\kill
      \>\> $\forall x \in \var$: $live(x):=$ number of clauses $x$ appears in\\
      \>\>  $\forall c \in \cla$: $var(c):=$ the set of variables in $c$\\
      \>\>\>\>\> \>  $low(c):=i$, lowest $i$ with $x_i \in var(c)$\\
      \>\> $livevar:=0$; \\
      \>\>$t:=n$; \\
      \> \> start with the ordering $x_1,x_2,...,x_n$;\\
      \> \> \textbf{for} $i:=m$ \textbf{downto} 1\\
      \> \> \> $inserted := false$;\\
      \> \> \> \textbf{while} not $inserted$   \hspace{1cm}    /* try to insert $c_i$ after $x_t$ */\\
\>  \>    \> \> \> \textbf{if} $livevar+t-low(c_i)-|var(c_i)| \leq q$ \textbf{then} \\
\>  \>    \> \> \>\> insert $c_i$ after $x_t$;\\
 \>  \>    \> \> \>\> $inserted := true$;\\
 \>  \>   \> \> \> \>  $\forall x_j \in var(c_i): live(x_j) := live(x_j)-1$;\\
\>  \>  \> \> \> \>\> \> \>\>\>\;\textbf{if} $live(x_j)=0$ and $j > t$ \textbf{then} $livevar:=livevar-1$;\\
\>   \>   \> \> \>\textbf{else if} $t=0$ \textbf{then} halt: 'no for this value of $q$';\\
\>   \>   \> \> \>\textbf{else} $t:=t-1$; if $live(x_t) >0$ \textbf{then} $livevar:=livevar+1$;\\
      \> \> 'yes for this value of $q$';
    \end{tabbing}    
  \end{minipage}\\\hline 
\end{tabular}
\caption{Deciding if we can merge to a $q$-interval ordering, for fixed $q$, in $\bigoh(|E|)$ time} \label{alg2}
\end{figure}

We maintain for each $x \in \var$ the value $live(x)$ as the number of live clauses $x$ appears in, where a live clause is one whose position has not been decided yet. Also, we maintain $livevar$ as the number of variables indexed higher than the current $x_t$ and appearing in a live clause. Finally, $var(c_i)$ are the variables in clause $c_i$ and $low(c_i)$ the index of its lowest indexed variable.
The number of edges 
needed for $c_i$ if inserted immediately after $x_t$ is then
$$EdgesAdded(c_i)=livevar+t-low(c_i)-|var(c_i)|$$
This is so since we must add to $c_i$ exactly one edge for each variable $x$ not appearing in $c_i$, where $x$ satisfies one of the two conditions stated in Lemma \ref{key}. The first condition counts the number of variables indexed higher than the current $x_t$ and appearing in some clause indexed lower than $c_i$, i.e. $livevar$, minus the number of variables in $c_i$ of index higher than $t$. The second condition counts the number of variables strictly between $x_{low(c_i)}$ and $x_{t+1}$, i.e. 
$t-low(c_i)$, minus the number of variables in $c_i$ of index $t$ or less. Summing these two counts we get the above.
\end{proof}

\section{Proof of Theorem \ref{first}}
In this section we prove Theorem \ref{first}, namely that if we are given a CNF formula and a $k$-interval ordering of it, we can solve \#SAT and weighted {\sc MaxSAT} in time $\bigoh(m^34^k(m+n))$.
We do this by showing that the input has linear $\mathtt{ps}$-width at most $m2^k+1$ and applying the following result.

\begin{theorem}\label{theorem:SolvingSATonLinearBranchDec}\cite{STV}
  Given a CNF formula $F$ with $n$ variables $\var$ and $m$ clauses $\cla$, 
  and a linear ordering of $\cla \cup \var$ showing that $F$ has linear $\mathtt{ps}$-width at most $p$, we solve
  \textsc{\#SAT} and weighted \textsc{MaxSAT} in
  time $\bigoh(p^2m(m+n))$.
\end{theorem}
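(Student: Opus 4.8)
The plan is to prove Theorem~\ref{theorem:SolvingSATonLinearBranchDec} by a left-to-right dynamic program over the cuts of the given linear ordering, whose states are the distinct \emph{projections} that partial assignments of the left variables induce on the clauses crossing the cut, and whose number is precisely what linear $\mathtt{ps}$-width bounds. First I would set up the cuts: let $\pi$ be the given order on $\cla\cup\var$ and, for $i\in\{0,\dots,m+n\}$, let $(A_i,B_i)$ split $\cla\cup\var$ into the first $i$ elements and the rest, writing $\var_A,\cla_A$ (resp.\ $\var_B,\cla_B$) for the variables and clauses on each side. Call a clause \emph{crossing} at cut $i$ if it has a variable in $\var_A$ and a variable in $\var_B$. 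For a truth assignment $\tau$ of $\var_A$, its projection $P_i(\tau)$ records, for each crossing clause, whether $\tau$ already satisfies it and whether the clause lies in $A_i$ (committed) or $B_i$. I would define the table $T_i$ to map each realizable projection $S$ to a value: for $\#$SAT, the number of assignments $\tau$ of $\var_A$ that satisfy every clause all of whose variables lie in $\var_A$ and have projection exactly $S$; for weighted \textsc{MaxSAT}, the maximum over such $\tau$ of the total weight of the clauses it satisfies.

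The structural heart of the argument, which I would isolate as a lemma, is that the projection is a \emph{sufficient statistic}: if $\tau$ and $\tau'$ both satisfy all fully-left clauses and have the same projection at cut $i$, then they are interchangeable at every later cut. This holds because the only feature of a left assignment that influences future satisfaction is its action on the currently crossing clauses; clauses entirely inside $B_i$ are untouched by $\var_A$, crossing clauses are summarized by ``satisfied or not'', and clauses entirely inside $A_i$ are handled by the filtering step below. With this lemma in hand, the transitions are routine. Moving from cut $i$ to $i+1$ adds one element: if it is a variable $x$, I branch on $x\in\{0,1\}$, update the satisfied-status of every crossing clause containing $x$, and discard any branch in which assigning $x$ leaves a clause whose variables are now all in $\var_A$ and which sits in $A$ unsatisfied (contributing $0$ for $\#$SAT, $-\infty$ for \textsc{MaxSAT}); if it is a clause $C$, it moves from $B$ to $A$, and I keep only the states in which $C$ is satisfied when all of $C$'s variables already lie left, and otherwise simply mark $C$ as committed and still crossing. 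States that become identical are merged by summing counts or taking the maximum weight, which is sound exactly by the sufficient-statistic lemma. The answer is read off from $T_{m+n}$.

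For the running time, the sufficient-statistic lemma says the meaningful states at cut $i$ are exactly the distinct realizable projections, whose number is by definition the $\mathtt{ps}$-value of that cut and hence at most $p$ by hypothesis. There are $m+n$ cuts; each table holds at most $p$ entries, each a subset of the at most $m$ crossing clauses; updating the entries across one step and then deduplicating equal projections costs $\bigoh(p\cdot m)$ per entry, i.e.\ $\bigoh(p^2m)$ per cut. Multiplying gives the claimed $\bigoh(p^2m(m+n))$.

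The main obstacle I expect is twofold. The deeper part is the sufficient-statistic lemma and its exact alignment with the definition of $\mathtt{ps}$-width: I must verify carefully that merging assignments by their projection is both correct (the merged count or maximum is preserved at all subsequent cuts) and forces the table size down to the $\mathtt{ps}$-value rather than the naive $2^{|\cla|}$. The second, more technical point is the bookkeeping that realizes the $\bigoh(p^2m)$ per-cut budget — maintaining and comparing the projections incrementally so that deduplication stays within the stated bound.
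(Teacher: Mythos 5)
A preliminary remark: the paper does not prove this statement --- it imports it verbatim from \cite{STV} --- so your attempt can only be measured against the dynamic programming of \cite{STV} that it is evidently trying to reconstruct. Measured that way, there is a genuine gap at exactly the point you flag as ``the deeper part'': your sufficient statistic is too fine, and the number of its realizable values is \emph{not} bounded by the $\mathtt{ps}$-width. Your state records, for every crossing clause, whether $\tau$ already satisfies it, including clauses positioned \emph{left} of the cut that still have variables on the right (your ``committed and still crossing'' clauses). But the $\mathtt{ps}$-width hypothesis bounds only $|\mathtt{PS}(F_2(i))|$, the projections of left-variable assignments onto the \emph{right} clauses, and $|\mathtt{PS}(F_1(i))|$, the projections of \emph{right}-variable assignments onto the left clauses; the quantity your table size needs --- the number of distinct projections of left-variable assignments onto the left clauses --- is neither of these. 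Concretely, take $F=\bigwedge_{j=1}^{m}(x_j\lor y)$ with the ordering $C_1<x_1<C_2<x_2<\cdots<C_m<x_m<y$. Every cut subformula of this ordering has $\mathtt{ps}$-value at most $3$ (each $F_2(i)$ consists only of empty clauses, and each $F_1(i)$ is, up to one clause, a collection of copies of the unit clause $(y)$), so $p\le 3$; yet at the cut just before $y$ all $m$ clauses are committed and crossing, and the $2^m$ assignments of $x_1,\dots,x_m$ realize $2^m$ distinct states of your table. So your claim that ``the meaningful states at cut $i$ are exactly the distinct realizable projections, whose number is by definition the $\mathtt{ps}$-value of that cut'' is false, and the running-time analysis collapses with it.

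The missing idea --- and the actual source of the factor $p^2$ in $\bigoh(p^2m(m+n))$ --- is to refrain from recording which committed clauses are already satisfied, and instead to index the table by \emph{pairs} $(S_1,S_2)\in\mathtt{PS}(F_1(i))\times\mathtt{PS}(F_2(i))$: here $S_2$ is the projection of the left assignment onto the right clauses, while $S_1$ is a guessed projection of the eventual right assignment onto the left clauses, and the entry aggregates (counts for \textsc{\#SAT}, takes the best weight for \textsc{MaxSAT}) those left assignments $\tau$ for which every left clause not already satisfied by $\tau$ lies in $S_1$, i.e.\ would be rescued by any future assignment realizing $S_1$. In your write-up the per-cut cost $p^2m$ arises only as an accounting artifact ($p$ entries times $\bigoh(pm)$ update cost per entry), not from the true table size, which is a tell-tale sign of the gap. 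A smaller but real error: for weighted \textsc{MaxSAT} you assign $-\infty$ to any branch violating a fully-left clause, but \textsc{MaxSAT} permits unsatisfied clauses, so no branch may be discarded there; an unsatisfied clause simply contributes no weight.
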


We need to clarify what is meant by the linear $\mathtt{ps}$-width of a formula. We start with the related notion of $\mathtt{ps}$-value of a CNF formula $F$ on variables $\var$ and clauses $\cla$. 
For an assignment $\tau$ of 
$\var$, we denote by $\sat(F, \tau)$ the inclusion maximal set $\mathcal{C} \subseteq
\cla$ so that each clause in $\mathcal{C}$ is satisfied by $\tau$.
Such a subset $\mathcal{C} \subseteq
\cla$ is called \emph{projection
  satisfiable}. The $\mathtt{ps}$-value of $F$ is defined to be the number of projection satisfiable subsets of clauses, {\sl i.e.} $|\mathtt{PS}(F)|$, where
\[
  \mathtt{PS}(F) = %
    \{
      \sat(F,\tau)   : 
      \text{$\tau$ is an assignment of $\var$} \} \subseteq 2^{\cla}
    .
\]

Now, consider a linear ordering $e_1,e_2,...,e_{n+m}$ of $\var \cup \cla$. For any $1 \leq i \leq n+m$ we define two disjoint subformulas $F_1(i)$ and $F_2(i)$ crossing the cut between $\{e_1,...,e_i\}$ and $\{e_{i+1},...,e_{n+m}\}$. We define $F_1(i)$ to be the subformula we get by removing from $F$ all clauses not
in $\{e_1,...,e_i\}$ followed by removing from the remaining clauses each literal of a
variable not in $\{e_{i+1},...,e_{n+m}\}$, and we define $F_2(i)$ vice-versa, as the subformula we get by removing from $F$ all clauses not
in $\{e_{i+1},...,e_{n+m}\}$ followed by removing from the remaining clauses each literal of a
variable not in  $\{e_1,...,e_i\}$. 

The $\mathtt{ps}$-width of this linear ordering is defined to be the maximum $\mathtt{ps}$-value over all the $2(n+m)$ subformulas $F_1(1), F_2(1), F_1(2),...,F_2(n+m)$ that cross a cut of the ordering. 
The linear $\mathtt{ps}$-width of $F$ is defined to be the minimum $\mathtt{ps}$-width of all linear orderings of $\var \cup \cla$.

Before giving the lemma that will prove Theorem \ref{first} we state a useful result.

\begin{lemma}\cite{STV}\label{int-cla}
Any interval ordering of an interval CNF formula has $\mathtt{ps}$-width no more than the number of its clauses plus one.
\end{lemma}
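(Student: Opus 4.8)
The plan is to fix an interval ordering and a cut position $i$, and to bound the number of projection satisfiable sets of each of the two crossing subformulas $F_1(i)$ and $F_2(i)$ separately by $m+1$. Since $F_1(i)$ (clauses on the left, variables on the right) and $F_2(i)$ (clauses on the right, variables on the left) play symmetric roles under interchanging clauses with variables and reversing the order, I would prove the bound in full for $F_1(i)$ and obtain $F_2(i)$ by symmetry.

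First I would extract the structural consequence of the interval ordering. Writing the clauses of $F_1(i)$ in the induced order as $C_1 < C_2 < \dots < C_q$ (so $q \le m$), I claim that each right variable $x$ occurs in a \emph{suffix} of them: if $x$ occurs in $C_j$ and $C_{j'} > C_j$ is another left clause, then $C_{j'} < x$ (a left clause precedes every right variable), so Condition 2 of Lemma \ref{lemma} forces $x$ to occur in $C_{j'}$ as well. Letting $V_j$ be the set of right variables occurring in $C_j$, this means the variable sets are \emph{nested}: $V_1 \subseteq V_2 \subseteq \dots \subseteq V_q$. For $F_2(i)$ the symmetric argument uses Condition 1 instead: the set of left variables of each right clause is a suffix of the variable order, and suffixes are totally ordered by inclusion, so again the clause variable sets are nested.

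The heart of the proof is then a counting lemma: any formula whose clauses $C_1, \dots, C_q$ have nested variable sets $V_1 \subseteq \dots \subseteq V_q$ has at most $q+1$ projection satisfiable sets. For each $j$ let $f_j$ be the unique assignment of $V_j$ that falsifies $C_j$ (assuming, as we may, that no clause is a tautology); then $C_j$ is unsatisfied by $\tau$ exactly when $\tau$ agrees with $f_j$ on $V_j$. Writing $U(\tau)$ for the set of unsatisfied clauses, so that $\sat(F,\tau)$ is its complement, it suffices to bound the number of distinct $U(\tau)$. The key observation is that $U(\tau)$ is determined by its largest element: if $k = \max U(\tau)$, then $\tau$ agrees with $f_k$ on all of $V_k$, and since $V_j \subseteq V_k$ for every $j < k$, clause $C_j$ is unsatisfied iff $f_k$ and $f_j$ agree on $V_j$ — a condition not depending on $\tau$. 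Hence $U(\tau) = \{\, j \le k : f_k \text{ and } f_j \text{ agree on } V_j \,\}$ is a function of $k$ alone, giving at most $q$ nonempty sets $U(\tau)$ plus the empty one, so at most $q+1$ values of $\sat(F,\tau)$.

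Combining the two steps, each crossing subformula has at most $m+1$ projection satisfiable sets, so the $\mathtt{ps}$-width of the interval ordering is at most $m+1$. I expect the main obstacle to be making the counting lemma clean in the presence of literal polarities: a priori the number of satisfied-clause patterns could be exponential, and it is precisely the ``determined by the maximum unsatisfied clause'' observation, powered by nestedness, that collapses it to linear. A secondary point needing care is matching each interval-ordering condition to the correct crossing subformula and checking the direction of the suffix (and hence nesting) relation.
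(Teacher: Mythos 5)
The paper does not prove this lemma itself --- it is imported from \cite{STV} --- so there is no in-paper argument to compare against, but your proof is correct and is essentially the intended one: Conditions 1 and 2 of Lemma \ref{lemma} force the crossing clause sets $V_1\subseteq\cdots\subseteq V_q$ to be nested at every cut, and the ``$U(\tau)$ is determined by its maximum element'' observation then collapses the count to $q+1\le m+1$. Your handling of the side cases (tautological clauses contribute identically to every $\sat$ set, empty restricted clauses have the empty falsifying assignment, and for $F_2(i)$ the nesting order need not match the clause order) is all sound.
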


\begin{lemma}\label{second}
Let $F$ be a $k$-interval CNF formula on $m$ clauses. Then any $k$-interval ordering of it has $\mathtt{ps}$-width at most $m2^k+1$.
\end{lemma}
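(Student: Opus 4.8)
The plan is to reduce Lemma~\ref{second} to Lemma~\ref{int-cla} by realizing the $k$-interval formula $F$ as a ``coarsening'' of a genuine interval CNF formula with at most $m2^k$ clauses, reusing the clause-splitting construction already used in the satisfiability argument of Theorem~\ref{first}.

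First I would use Lemma~\ref{key} to pin down, for each clause $C$, a set $E_C$ of at most $k$ variables: the $\ell_C := |E_C| \le k$ variables not in $C$ that witness condition~1 or~2 at $C$. Adding edges from $C$ to the variables in $E_C$ turns the given $k$-interval ordering into an interval ordering. Then, exactly as in the proof sketch of Theorem~\ref{first}, I would build a formula $F''$ by replacing each clause $C$ with the $2^{\ell_C}$ clauses $C \vee D_j$, where $D_j$ ranges over all clauses on the variables of $E_C$, and by placing all these copies consecutively at the former position of $C$. Since every copy of $C$ has the same variable-neighborhood $\var(C) \cup E_C$, and this is precisely the interval neighborhood certified above, the resulting ordering of $F''$ is an interval ordering. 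As $F''$ has at most $\sum_C 2^{\ell_C} \le m2^k$ clauses, Lemma~\ref{int-cla} bounds the $\mathtt{ps}$-width of this interval ordering of $F''$ by $m2^k+1$.

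The heart of the proof is then to transfer this bound back to $F$. Each cut of the ordering of $F$ corresponds to a cut of the ordering of $F''$ lying on a block boundary, so that all copies of any clause sit on a single side; hence the crossing subformulas $F_1(i), F_2(i)$ of $F$ and the corresponding subformulas of $F''$ induce the same two-sided split of the variables. I would establish the key claim that, within such a crossing subformula and for any assignment $\tau$ of the free side, a clause $C$ lies in $\sat(F_1(i),\tau)$ for $F$ if and only if \emph{all} of its copies $C \vee D_j$ lie in $\sat(\cdot,\tau)$ for $F''$: if some surviving literal of $C$ is satisfied then every copy is satisfied, whereas if none is, then the single copy whose $D_j$ is falsified by $\tau$ remains unsatisfied. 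This makes the satisfied-clause set of $F$ a function of the satisfied-copy set of $F''$, so the number of projection satisfiable subsets of each crossing subformula of $F$ is at most that of the corresponding subformula of $F''$.

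Combining the two steps, the $\mathtt{ps}$-value of every crossing subformula of $F$ is at most $m2^k+1$, so the $k$-interval ordering of $F$ has $\mathtt{ps}$-width at most $m2^k+1$, as claimed. I expect the main obstacle to be the equivalence of the third paragraph: it must be verified after the literals on one side of the cut have been deleted, so that each $D_j$ is truncated to its variables in $E_C$ on the free side (including the degenerate case where $E_C$ contributes nothing on that side), and it must be checked symmetrically for both $F_1(i)$ and $F_2(i)$. Once this ``all copies satisfied $\Leftrightarrow$ $C$ satisfied'' equivalence is in hand, the remainder is routine bookkeeping.
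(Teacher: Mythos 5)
Your proposal is correct and follows essentially the same route as the paper: split each clause $C$ into $2^{\ell_C}$ copies in place to obtain an interval ordering of a formula with at most $m2^k$ clauses, invoke Lemma~\ref{int-cla}, and transfer the bound back cut by cut using the observation that $C$ is satisfied exactly when all its (truncated) copies are, so that distinct sat-sets of $F_1(i)$ force distinct sat-sets of the expanded subformula. The paper phrases this last step as ``distinguishing assignments stay distinguished'' rather than as a function from one family of projection-satisfiable sets to the other, but the content is identical.
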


\begin{proof}
Starting from $F$ on $m$ clauses and its $k$-interval ordering $\pi$ we first construct an interval CNF formula $F'$ having at most $m2^k$ clauses. Any clause $C$ of $F$ for which Lemma \ref{key} prescribes $k' \leq k$ added edges from $C$ to some $k'$ variables, will be replaced in $F'$ by a set of
$2^{k'}$ clauses consisting of the clause $C$ extended by all linear combinations of these
$k'$ variables.
Note that $F'$ is then an interval CNF formula with an interval ordering $\pi'$ we get from $\pi$ by
naturally expanding a clause $C$ in $\pi$ to the $2^{k'}$ clauses, in any order, that replace $C$ in $F'$.

Applying Lemma \ref{int-cla} all we need to finalize our proof is to show that the $\mathtt{ps}$-width of the $k$-interval ordering $\pi$ of $F$ is no larger than the $\mathtt{ps}$-width of the interval ordering $\pi'$ of $F'$. To do this we must consider cuts of $\pi$.

Consider subformulas $F_1(i)$ and $F_2(i)$ of $F$ crossing a cut of $\pi$. We show that for the corresponding cut in $\pi'$ (i.e. we cut $\pi'$ in the corresponding place without splitting any of the expanded set of clauses) the $\mathtt{ps}$-values of  the subformulas $F_1'$ and $F_2'$ of $F'$ associated with this cut has $\mathtt{ps}$-value no smaller than the $\mathtt{ps}$-values of 
$F_1(i)$ and $F_2(i)$. That is $|\mathtt{PS}(F_1(i))| \leq |\mathtt{PS}(F_1')|$ and $|\mathtt{PS}(F_2(i))| \leq |\mathtt{PS}(F_2')|$. Note that the variables of $F_1(i)$ and $F_1'$ are the same, and similarly the variables of $F_2(i)$ and $F_2'$ are the same. W.l.o.g., we focus on $F_1(i)$ and $F_1'$, which we assume have variables $\var_1$. 

We need to show that if two assignments $a,b$ of $\var_1$ have $\sat(F_1(i), a) \neq \sat(F_1(i), b)$ then also $\sat(F_1', a) \neq \sat(F_1', b)$. W.l.o.g., assume some clause $C \in  \sat(F_1(i), a)$ but $C \not \in  \sat(F_1(i), b)$. We show that we can find a clause $C'$ that distinguishes $a$ and $b$ in $F_1'$ as well. Clause $C$ of $F_1(i)$ comes from an original clause (possibly larger, since $C$ lives across a cut) in $F$.  Assume this original clause was expanded in $F'$ to $2^{k'}$ clauses, for some $k' \leq k$, by extending it with all linear combinations of the
$k'$ new variables. Depending on which variables are on the other side of the cut the clause $C$ of $F_1(i)$ has been expanded to a set of $2^{k''}$, for some $k'' \leq k'$, clauses in $F_1'$, still consisting of all linear combinations of the
$k''$ variables not in $C$.
Since $a$ satisfies $C$ and $C$ is a part of all these expanded clauses we have assignment $a$ satisfying all of them. Since $b$ does not satisfy $C$ there will be exactly one of these $2^{k''}$ clauses that are not satisfied by $b$, namely the one where the linear combination of the new variables is falsified by assignment $b$. This means that $\sat(F_1', a) \neq \sat(F_1', b)$.

Thus the $\mathtt{ps}$-width of the $k$-interval ordering of $F$ is no more than the $\mathtt{ps}$-width of the interval ordering of $F'$ and we are done.
\end{proof}

Combining Theorem \ref{theorem:SolvingSATonLinearBranchDec} with Lemma \ref{second} we arrive at Theorem \ref{first}. Combining with Theorem \ref{new} we get the following.

\begin{corollary}
 Given a CNF formula and two total orderings, one for its $m$ clauses and one for its $n$ variables, we can in polynomial time find the minimum $k$ such that these two orders can be merged into a $k$-interval ordering and then solve \#SAT and MaxSAT in time $\bigoh(m^34^k(m+n))$.
\end{corollary}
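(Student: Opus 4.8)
The plan is to compose the two algorithmic results already established, observing that the greedy procedure delivers not only the optimal value of $k$ but also a witnessing ordering. First I would invoke Theorem \ref{new}: running the Greedy Algorithm on the given graph $G=(\cla,\var,E)$ together with the two input orders computes, in time $\bigoh(|E|\log k)$, the minimum $k$ for which a $k$-interval ordering compatible with both orders exists. Since $|E| \leq mn$, this phase runs in polynomial time in the input size, as claimed.

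The key observation for linking the two phases is that the final, \emph{successful} pass of the greedy loop does more than certify the value of $k$: as it inserts the clauses $c_m, c_{m-1}, \ldots, c_1$ one by one among the variables, it fixes a concrete position for every clause relative to the fixed variable order $x_1, \ldots, x_n$. These clause positions, interleaved with the input variable order, constitute an explicit total order of $\cla \cup \var$; and by the correctness argument of Theorem \ref{new} (via Observation \ref{obs} and Lemma \ref{key}) every clause in this order needs at most $k$ added edges, so the order is a genuine $k$-interval ordering. Hence, at no extra asymptotic cost beyond the greedy run, I obtain the $k$-interval ordering itself, not merely the number $k$.

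With a $k$-interval ordering in hand, I would then apply Theorem \ref{first}, which solves \#SAT and weighted {\sc MaxSAT} in time $\bigoh(m^34^k(m+n))$ given exactly such an ordering. Chaining the two phases yields the statement: a polynomial-time computation of the minimum $k$ together with an accompanying $k$-interval ordering, followed by the fixed-parameter solving step whose cost is $\bigoh(m^34^k(m+n))$.

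The only point requiring care---and the step I would scrutinize most---is the bookkeeping ensuring that the ordering returned is consistent with the reported value of $k$. Because the minimum $k$ is located by galloping and then binary search, several aborted passes (for values of $q$ that are too small) precede the final one, and their partial clause positions must not be confused with the solution. The fix is simply to retain the clause positions recorded in the pass for which all $m$ clauses were inserted with $EdgesAdded(c_i)\le q=k$, discarding positions from every failed pass; the correctness of Theorem \ref{new} then guarantees these retained positions certify the reported $k$. This is routine, but worth stating explicitly so that the handoff to Theorem \ref{first} is unambiguous.
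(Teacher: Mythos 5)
Your proposal is correct and matches the paper's own (implicit) argument exactly: the corollary is obtained by chaining Theorem \ref{new} with Theorem \ref{first}, and your added remark that the final successful pass of the greedy algorithm yields the witnessing $k$-interval ordering (not just the value $k$) is a sensible clarification of a detail the paper leaves unstated.
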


\section{Proof of Theorem  \ref{sec}}

In this section we prove Theorem \ref{sec}, namely that it is NP-hard to recognize $k$-interval bigraphs, already for $k=1$.

\begin{proof}
 We give a polynomial time reduction from the 3-\textsc{Partition} problem, which is strongly NP-hard \cite{DBLP:books/fm/GareyJ79}.
 Given an integer $b$, a set $A$ of $3n$ elements, and a positive integer $s(a)$ for each $a\in A$ such that $b/4 < s(a) < b/2$ for each $a\in A$ and $\sum_{a\in A} s(a) = n\cdot b$, the question is whether $A$ can be partitioned into disjoint sets $A_1, \dots, A_n$ such that $\sum_{a\in A_i} s(a) = b$ for each $i\in \{1,\dots, n\}$.
 
 For a 3-\textsc{Partition} instance $(b,A,s)$, we construct an instance $G=(V,E)$ for the 1-interval bigraph recognition problem as follows.
 We assume, w.l.o.g., that $b\ge 4$, and therefore, $s(a)>1$ for each $a\in A$.
 
 We add a set of \emph{slot} vertices $S = \bigcup_{i=1}^n S_i$ with $S_i = \{s_{i,1},\dots,s_{i,b+1}\}$.
 For all $i,j$ with $1\le i\le n$ and $1\le j\le b$ we add a vertex $\ell_{i,j}$ that is adjacent to both $s_{i,j}$ and $s_{i,j+1}$,
 so that
 $(s_{i,1}, \ell_{i,1}, s_{i,2}, \ell_{i,2}, \dots, s_{i,b}, \ell_{i,b}, s_{i,b+1})$ is a path for each $i\in \{1,\dots,n\}$.
 
 For each $i\in \{1,\dots,n-1\}$ we add a \emph{delimiter} vertex $s^d_{i}$,
 and two vertices $\ell^{d,1}_{i}$ and $\ell^{d,2}_i$.
 We make $\ell^{d,1}_{i}$ adjacent to $s_{i,b}$, $s_{i,b+1}$, $s^d_i$, and $s_{i+1,1}$ and
 we make $\ell^{d,2}_{i}$ adjacent to $s_{i,b+1}$, $s^d_i$, $s_{i+1,1}$, and $s_{i+1,2}$.
 The set of delimiter vertices is $D = \bigcup_{i=1}^{n-1} \{s^d_{i}\}$.
 
 We add a \emph{track} vertex $t$ that is adjacent to each vertex in $S\cup D\setminus \{s^d_{1}\}$.
 
 We add \emph{left anchor} vertices
 $a^l$, $\ell^{a,l}$, and
 make $\ell^{a,l}$ adjacent to $a^l$, $s_{1,1}$, and $s_{1,2}$.
 Symmetrically, we add \emph{right anchor} vertices
 $a^r$, $\ell^{a,r}$, and
 make $\ell^{a,r}$ adjacent to $a^r$, $s_{n,b+1}$, and $s_{n,b}$.
 See Figure~\ref{fig:red-recognition} for an illustration of the graph constructed so far.
 
 For each element $a\in A$, we add a \emph{numeral} gadget which is obtained from a path on $2\cdot s(a)+1$ new vertices $(\ell^{n}_{a,0},n_{a,1},\ell^{n}_{a,1},\dots,n_{a,s(a)-1},\ell^{n}_{a,s(a)-1},n_{a,s(a)},\ell^{n}_{a,s(a)})$ and the track vertex $t$ is made adjacent to $n_{a,1},\dots,n_{a,s(a)}$.
 See Figure~\ref{fig:red-numeral} for an illustration of a numeral gadget.

 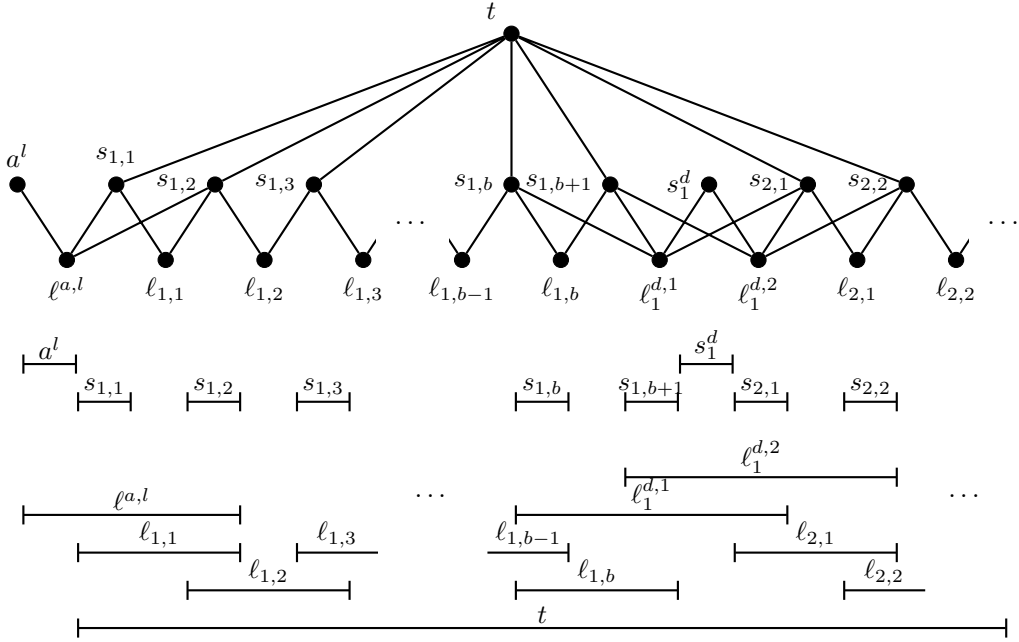
\begin{figure}[tb]
 \begin{center}
 \begin{tikzpicture}[xscale=0.65]
  \tikzset{vertex/.style={minimum size=2mm,circle,fill=black,inner sep=0mm,draw}}
  
  \draw (-2,1) node[vertex,label=above:$a^l$]       (al) {};
  \draw ( 0,1) node[vertex,label=above:$s_{1,1}$]   (s1) {};
  \draw ( 2,1) node[vertex,label=left:$s_{1,2}$]   (s2) {};
  \draw ( 4,1) node[vertex,label=left:$s_{1,3}$]   (s3) {};
  \node at (6,0.5) {$\dots$};
  \draw ( 8,1) node[vertex,label=left:$s_{1,b}$]   (sb) {};
  \draw (10,1) node[vertex,label=left:$s_{1,b+1}$] (sbp) {};
  \draw (12,1) node[vertex,label=left:$s^d_{1}$]   (sd) {};
  \draw (14,1) node[vertex,label=left:$s_{2,1}$]   (s21) {};
  \draw (16,1) node[vertex,label=left:$s_{2,2}$]   (s22) {};
  \node at (18,0.5) {$\dots$};
  
  \draw (-1,0) node[vertex,label=below:$\ell^{a,l}$]     (la) {};
  \draw ( 1,0) node[vertex,label=below:$\ell_{1,1}$]     (l1) {};
  \draw ( 3,0) node[vertex,label=below:$\ell_{1,2}$]     (l2) {};
  \draw ( 5,0) node[vertex,label=below:$\ell_{1,3}$]     (l3) {};
  \draw ( 7,0) node[vertex,label=below:$\ell_{1,b-1}$]   (lbm) {};
  \draw ( 9,0) node[vertex,label=below:$\ell_{1,b}$]     (lb) {};
  \draw (11,0) node[vertex,label=below:$\ell^{d,1}_{1}$] (ld1) {};
  \draw (13,0) node[vertex,label=below:$\ell^{d,2}_{1}$] (ld2) {};
  \draw (15,0) node[vertex,label=below:$\ell_{2,1}$]     (l21) {};
  \draw (17,0) node[vertex,label=below:$\ell_{2,2}$]     (l22) {};
  
  \draw (8,3) node[vertex,label=above left:$t$] (t) {};
  
  \draw[thick] (la)--(s2) (al)--(la)--(s1)--(l1)--(s2)--(l2)--(s3)--(l3);
  \begin{scope}
     \clip (0,0) rectangle (5.25,1);
     \draw[thick] (l3)--(6,1);
  \end{scope}
  \begin{scope}
     \clip (6.75,0) rectangle (7,1);
     \draw[thick] (6,1)--(lbm);
  \end{scope}
  \draw[thick] (lbm)--(sb)--(lb)--(sbp)--(ld1)--(sd)--(ld2)--(s21)--(l21)--(s22)--(l22)
               (sb)--(ld1)--(s21)
               (sbp)--(ld2)--(s22);
  \begin{scope}
    \clip (17,0) rectangle (17.25,1);
    \draw[thick] (l22)--(18,1);
  \end{scope}
  \draw[thick] (t)--(s1) (t)--(s2) (t)--(s3) (t)--(sb) (t)--(sbp) 
               (t)--(s21) (t)--(s22);
 \end{tikzpicture}
 \begin{tikzpicture}[xscale=0.72,yscale=0.5]
  \tikzset{ivl/.style={thick},
  ivl/.default=black,|-|,
  name2/.style={midway,above=-2pt}}

  \draw[ivl] ( 3,6) -- node[name2] {$s_{1,1}$}   ( 4,6);
  \draw[ivl] ( 5,6) -- node[name2] {$s_{1,2}$}   ( 6,6);
  \draw[ivl] ( 7,6) -- node[name2] {$s_{1,3}$}   ( 8,6);
  \draw[ivl] (11,6) -- node[name2] {$s_{1,b}$}   (12,6);
  \draw[ivl] (13,6) -- node[name2] {$s_{1,b+1}$} (14,6);
  \draw[ivl] (14,7) -- node[name2] {$s^d_{1}$}   (15,7);
  \draw[ivl] (15,6) -- node[name2] {$s_{2,1}$}   (16,6);
  \draw[ivl] (17,6) -- node[name2] {$s_{2,2}$}   (18,6);
  
  \draw[ivl]    ( 3  ,0) -- node[name2] {$t$}              (20,0);
  \draw[ivl]    ( 3  ,2) -- node[name2] {$\ell_{1,1}$}      (6,2);
  \draw[ivl]    ( 5  ,1) -- node[name2] {$\ell_{1,2}$}      (8,1);
  \draw[ivl,|-] ( 7  ,2) -- node[name2] {$\ell_{1,3}$}    (8.5,2);
  \node at (9.5,3.5) {$\dots$};
  \draw[ivl,-|] (10.5,2) -- node[name2] {$\ell_{1,b-1}$}   (12,2);
  \draw[ivl]    (11  ,1) -- node[name2] {$\ell_{1,b}$}     (14,1);
  \draw[ivl]    (11  ,3) -- node[name2] {$\ell^{d,1}_{1}$} (16,3);
  \draw[ivl]    (13  ,4) -- node[name2] {$\ell^{d,2}_{1}$} (18,4);
  \draw[ivl]    (15  ,2) -- node[name2] {$\ell_{2,1}$}     (18,2);
  \draw[ivl,|-] (17  ,1) -- node[name2] {$\ell_{2,2}$}   (18.5,1);
  \node at (19.25,3.5) {$\dots$};

  \draw[ivl] (2,7) -- node[name2] {$a^l$}        (3,7);
  \draw[ivl] (2,3) -- node[name2] {$\ell^{a,l}$} (6,3);
 \end{tikzpicture}
 \end{center}
 \caption{\label{fig:red-recognition} A part of the graph constructed by our reduction and a corresponding 1-interval representation formed by the all the vertices except the numeral gadgets. The top two rows of intervals correspond to the vertices in one partite set of the bipartition and the bottom rows to vertices in the other partite set.}
\end{figure}

 \begin{figure}[tb]
 \begin{center}
 \begin{tikzpicture}[baseline=0cm,xscale=0.6]
 \tikzset{vertex/.style={minimum size=2mm,circle,fill=black,inner sep=0mm,draw}}
 
 \draw ( 0,1) node[vertex,label=above:$n_{a,1}$]   (n1) {};
 \draw ( 2,1) node[vertex,label=left:$n_{a,2}$]    (n2) {};
 \draw ( 4,1) node[vertex,label=left:$n_{a,3}$]    (n3) {};
 \node at (5.5,0.5) {$\dots$};
 \draw ( 7,1) node[vertex,label=left:$n_{a,s(a)-1}$]   (nsm) {};
 \draw ( 9,1) node[vertex,label=above:$n_{a,s(a)}$] (ns) {};
 
 \draw (-1,0) node[vertex,label=below:$\ell^n_{a,0}$]      (l0) {};
 \draw ( 1,0) node[vertex,label=below:$\ell^n_{a,1}$]      (l1) {};
 \draw ( 3,0) node[vertex,label=below:$\ell^n_{a,2}$]      (l2) {};
 \draw ( 8,0) node[vertex,label=below:$\ell^n_{a,s(a)-1}$] (lsm) {};
 \draw (10,0) node[vertex,label=below:$\ell^n_{a,s(a)}$]   (ls) {};
 
 \draw (5,3) node[vertex,label=above left:$t$] (t) {};
 
 \draw[thick] (l0)--(n1)--(l1)--(n2)--(l2)--(n3); 
 \begin{scope}
 \clip (0,0) rectangle (4.25,1);
 \draw[thick] (n3)--(5,0);
 \end{scope}
 \begin{scope}
 \clip (6.75,0) rectangle (7,1);
 \draw[thick] (6,0)--(nsm);
 \end{scope}
 \draw[thick] (nsm)--(lsm)--(ns)--(ls);
 \draw[thick] (t)--(n1) (t)--(n2) (t)--(n3) (t)--(nsm) (t)--(ns);
 \end{tikzpicture}\hfill%
 \begin{tikzpicture}[baseline=0.5cm,xscale=0.54,yscale=0.6]
  \tikzset{ivl/.style={thick},
  ivl/.default=black,|-|,
  name2/.style={midway,above=-2pt}}

  \draw[ivl] (3,4) -- node[name2] {$n_{a,1}$} (4,4);
  \draw[ivl] (5,4) -- node[name2] {$n_{a,2}$} (6,4);
  \draw[ivl] (7,4) -- node[name2] {$n_{a,3}$} (8,4);
  \draw[ivl] (11,4) -- node[name2,xshift=-2pt] {$n_{a,s(a)-1}$} (12,4);
  \draw[ivl] (13,4) -- node[name2] {$n_{a,s(a)}$} (14,4);
  
  \draw[ivl] (3,1) -- node[name2] {$\ell^n_{a,0}$} (4,1);
  \draw[ivl] (3,2) -- node[name2] {$\ell^n_{a,1}$} (6,2);
  \draw[ivl] (5,1) -- node[name2] {$\ell^n_{a,2}$} (8,1);
  \draw[ivl,|-] (7,2) -- node[name2] {$\ell^n_{a,3}$} (8.5,2);
  \node at (9.5,1.5) {$\dots$};
  \draw[ivl,-|] (10.5,2) -- node[name2] {$\ell^n_{a,s(a)-2}$} (12,2);
  \draw[ivl] (11,1) -- node[name2] {$\ell^n_{a,s(a)-1}$} (14,1);
  \draw[ivl] (13,2) -- node[name2] {$\ell^n_{a,s(a)}$} (14,2);

 \end{tikzpicture}
 \end{center}
 \caption{\label{fig:red-numeral}  A numeral gadget for element $a\in A$.} 
\end{figure}
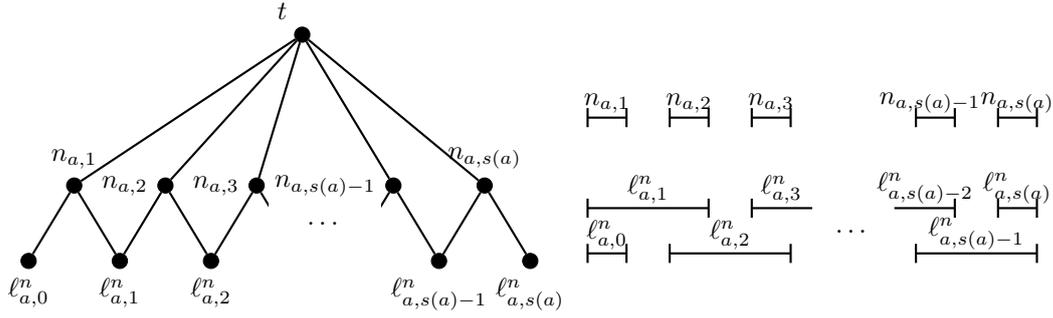

We will now show that $(b,A,s)$ is a Yes-instance for 3-\textsc{Partition} if and only if
$G$ is a $1$-interval bigraph.
For the forward direction, consider a solution $A_1, \dots, A_n$ to the 3-\textsc{Partition} instance.
We construct an interval representation following the scheme outlined in Figure \ref{fig:red-recognition},
which is missing the numeral gadgets.
Now, for each $A_i = \{x,y,z\}$, we can
intersperse the intervals $s_{i,1},\dots,s_{i,s(x)+1}$ with the intervals $n_{x,1}, \dots, n_{x,s(x)}$,
intersperse the intervals $s_{i,s(x)+1}, \dots, \linebreak[3] s_{i,s(x)+s(y)+1}$ with the intervals $n_{y,1}, \dots, n_{y,s(y)}$, and
intersperse the intervals $s_{i,s(x)+s(y)+1}, \dots, s_{i,b+1}$ with the intervals $n_{z,1}, \dots, n_{z,s(z)}$.
In this way, each vertex $\ell_{i,j}$, $1\le i\le n$, $1\le j\le b$, is non-adjacent to exactly one vertex (from $\{n_{a,1},\dots,n_{a,s(a)} : a\in A\}$) whose corresponding intervals overlap, and each vertex $\ell^{n}_{a,j}$, $a\in A$, $1\le j\le s(a)-1$, is non-adjacent to exactly one vertex (from $\{s_{i,2}, \dots, s_{i,b} : 1\le i\le n \}$) whose corresponding intervals overlap.
This certifies that $G$ is a 1-interval bigraph.

For the backward direction, we observe that our construction enforces the rigid structure from Figure~\ref{fig:red-recognition}.
Intuitively, for each $i\in \{1,\dots, n\}$, the vertices $\ell_{i,j}$ enforce an ordering of the intervals corresponding to the vertices in $S_i$, and the delimiters glue the different sections of $S_i$ vertices together in a linear fashion. Observe that between two vertices $s_{i,j}$ and $s_{i,j+1}$, we can still insert one vertex if it is adjacent to $t$, and we exploit this property to intersperse the numerals. The anchor vertices are used to stretch the structure of the slot vertices beyond the left and the right of the track $t$. This ensures then that the numerals need to be interspersed with the slots. Since there are no elements $a\in A$ with $s(a)=1$, it is also not possible for a numeral gadget to intersperse a section $S_i$ of slot vertices before $s_{i,1}$ or after $s_{i,b+1}$. In addition, the delimiters ensure that numerals do not straddle different $S_i$'s.
Therefore, we can obtain a solution to the 3-\textsc{Partition} instance by setting $A_i$ to the elements from $A$ that we used to construct the numeral gadgets that are interspersed with the slots in $S_i$.
\end{proof}

\section{Discussion}

The algorithmic challenge of CNF satisfiability and constraint satisfaction is central in both computational theory and practice, and new angles of attack to these age-old problems keep emerging. Here we focused on instances which possess a linear structure, and we proposed a new approach to dealing with local departures from such structure, as well as for deducing linear structure from partial evidence; we also identified complexity obstacles to fully exploiting and extending our approach. Our work raises several questions:

\begin{itemize}
	\item What if only one side of the bipartite incidence graph is ordered? Say we are given an ordering of variables and asked if the clauses can be inserted so as to yield a $k$-interval ordering. For the case $k = 0$ we can use the obstructions in Figure \ref{fig:tree} to guide us towards a linear ordering also of the clauses, e.g., for a pair of clauses $A, C$ with two variables $x < z$ where $xC, zA$ are edges and $zC$ is a non-edge we must place $C$ before $A$. We believe such an approach should solve the $k = 0$ case in polynomial time, but we are less optimistic about the general case of minimizing $k$.

	\item What if we are given a partial order, with some special properties, on variables and clauses? Note that already the approach for $k = 0$ hinted at above could yield a situation with a linear order on variables and a partial order on clauses.

	\item For which industrial CNF instances can we find $k$-interval orderings for low values of $k$? Our greedy algorithm for merging two linear orders to a minimum $k$-interval ordering is practical and can be applied to large instances in the SAT corpora. In light of the hardness result for recognizing 1-interval bigraphs, heuristics or domain expertise could be used to generate orders for clauses and variables, when they are not already given.

	\item Which other classes of interval bigraph CSP instances can be solved efficiently? Our hardness result is for general CSPs with large domains.
	For CSPs with Boolean domains we can show a similar hardness result albeit not for $k$-interval bigraph instances, instead for a different notion of ``imperfection'' 
	where we are given $k$ pairs of clause vertices in the incidence graph such that merging each such pair results in an interval bigraph. 
\end{itemize}

\subparagraph*{Acknowledgements}

This work was partially supported by a grant from the Peder Sather Center at UC Berkeley.
Serge Gaspers is the recipient of an Australian Research Council (ARC) Future Fellowship (FT140100048) and acknowledges support under the ARC's Discovery Projects funding scheme (DP150101134). NICTA is funded by the Australian Government through the Department of Communications and the ARC through the ICT Centre of Excellence Program.

\bibliographystyle{plain}
\bibliography{satbib}

\begin{thebibliography}{10}

\bibitem{Bar-Noy:2000:UAA:335305.335410}
Amotz Bar-Noy, Reuven Bar-Yehuda, Ari Freund, Joseph~(Seffi) Naor, and Baruch
  Schieber.
\newblock A unified approach to approximating resource allocation and
  scheduling.
\newblock In {\em Proceedings of the Thirty-second Annual ACM Symposium on
  Theory of Computing}, STOC '00, pages 735--744, New York, NY, USA, 2000. ACM.

\bibitem{DBLP:journals/arscom/BrandstadtL03}
Andreas Brandst{\"a}dt and Vadim~V. Lozin.
\newblock On the linear structure and clique-width of bipartite permutation
  graphs.
\newblock {\em Ars Comb.}, 67, 2003.

\bibitem{Brault-Baron}
Johann Brault{-}Baron, Florent Capelli, and Stefan Mengel.
\newblock Understanding model counting for beta-acyclic cnf-formulas.
\newblock In {\em 32nd International Symposium on Theoretical Aspects of
  Computer Science, {STACS} 2015, March 4-7, 2015, Garching, Germany}, pages
  143--156, 2015.

\bibitem{DowneyF}
Rodney~G. Downey and Michael~R. Fellows.
\newblock {\em Parameterized Complexity}.
\newblock Monographs in Computer Science. Springer, 1999.

\bibitem{DBLP:journals/dam/FischerMR08}
Eldar Fischer, Johann~A. Makowsky, and Elena~V. Ravve.
\newblock Counting truth assignments of formulas of bounded tree-width or
  clique-width.
\newblock {\em Discrete Applied Mathematics}, 156(4):511--529, 2008.

\bibitem{DBLP:books/fm/GareyJ79}
M.~R. Garey and David~S. Johnson.
\newblock {\em Computers and Intractability: {A} Guide to the Theory of
  {NP}-Completeness}.
\newblock W. H. Freeman, 1979.

\bibitem{DBLP:conf/birthday/GaspersS12}
Serge Gaspers and Stefan Szeider.
\newblock Backdoors to satisfaction.
\newblock In Hans~L. Bodlaender, Rod Downey, Fedor~V. Fomin, and D{\'{a}}niel
  Marx, editors, {\em The Multivariate Algorithmic Revolution and Beyond -
  Essays Dedicated to Michael R. Fellows on the Occasion of His 60th Birthday},
  volume 7370 of {\em Lecture Notes in Computer Science}, pages 287--317.
  Springer, 2012.

\bibitem{DBLP:conf/focs/GaspersS13}
Serge Gaspers and Stefan Szeider.
\newblock Strong backdoors to bounded treewidth {SAT}.
\newblock In {\em 54th Annual {IEEE} Symposium on Foundations of Computer
  Science, {FOCS} 2013, 26-29 October, 2013, Berkeley, CA, {USA}}, pages
  489--498. {IEEE} Computer Society, 2013.

\bibitem{Golumbic:1993:CAR:174147.169675}
Martin~Charles Golumbic and Ron Shamir.
\newblock Complexity and algorithms for reasoning about time: A graph-theoretic
  approach.
\newblock {\em J. ACM}, 40(5):1108--1133, November 1993.

\bibitem{harary1982bipartite}
Frank Harary, Jerald~A Kabell, and Frederick~R McMorris.
\newblock Bipartite intersection graphs.
\newblock {\em Commentationes Mathematicae Universitatis Carolinae},
  23(4):739--745, 1982.

\bibitem{HellHuang}
Pavol Hell and Jing Huang.
\newblock Interval bigraphs and circular arc graphs.
\newblock {\em Journal of Graph Theory}, 46(4):313--327, 2004.

\bibitem{DBLP:journals/dam/Muller97}
Haiko M{\"u}ller.
\newblock Recognizing interval digraphs and interval bigraphs in polynomial
  time.
\newblock {\em Discrete Applied Mathematics}, 78(1-3):189--205, 1997.

\bibitem{PapadimitriouY99}
Christos~H. Papadimitriou and Mihalis Yannakakis.
\newblock On the complexity of database queries.
\newblock {\em Journal of Computer and System Sciences}, 58(3):407 -- 427,
  1999.

\bibitem{DBLP:conf/stacs/PaulusmaSS13}
Dani{\"e}l Paulusma, Friedrich Slivovsky, and Stefan Szeider.
\newblock Model counting for {CNF} formulas of bounded modular treewidth.
\newblock In Natacha Portier and Thomas Wilke, editors, {\em STACS}, volume~20
  of {\em LIPIcs}, pages 55--66. Schloss Dagstuhl - Leibniz-Zentrum fuer
  Informatik, 2013.

\bibitem{DBLP:journals/corr/abs-1211-2662}
Arash Rafiey.
\newblock Recognizing interval bigraphs by forbidden patterns.
\newblock {\em CoRR}, abs/1211.2662, 2012.

\bibitem{STV}
Sigve~Hortemo S{\ae}ther, Jan~Arne Telle, and Martin Vatshelle.
\newblock Solving {\#}{SAT} and {MAXSAT} by dynamic programming.
\newblock {\em J. Artif. Intell. Res. {(JAIR)}}, 54:59--82, 2015.

\bibitem{DBLP:journals/jda/SamerS10}
Marko Samer and Stefan Szeider.
\newblock Algorithms for propositional model counting.
\newblock {\em J. Discrete Algorithms}, 8(1):50--64, 2010.

\bibitem{DBLP:conf/isaac/SlivovskyS13}
Friedrich Slivovsky and Stefan Szeider.
\newblock Model counting for formulas of bounded clique-width.
\newblock In Leizhen Cai, Siu-Wing Cheng, and Tak~Wah Lam, editors, {\em
  ISAAC}, volume 8283 of {\em Lecture Notes in Computer Science}, pages
  677--687. Springer, 2013.

\bibitem{Solnon}
Christine Solnon, Van{-}Dat Cung, Alain Nguyen, and Christian Artigues.
\newblock The car sequencing problem: Overview of state-of-the-art methods and
  industrial case-study of the {ROADEF}'2005 challenge problem.
\newblock {\em European Journal of Operational Research}, 191(3):912--927,
  2008.

\bibitem{DBLP:conf/sat/Szeider03}
Stefan Szeider.
\newblock On fixed-parameter tractable parameterizations of {SAT}.
\newblock In Enrico Giunchiglia and Armando Tacchella, editors, {\em SAT 2003},
  volume 2919 of {\em Lecture Notes in Computer Science}, pages 188--202.
  Springer, 2003.

\bibitem{DBLP:conf/ijcai/WilliamsGS03}
Ryan Williams, Carla~P. Gomes, and Bart Selman.
\newblock Backdoors to typical case complexity.
\newblock In Georg Gottlob and Toby Walsh, editors, {\em IJCAI-03, Proceedings
  of the Eighteenth International Joint Conference on Artificial Intelligence,
  Acapulco, Mexico, August 9-15, 2003}, pages 1173--1178. Morgan Kaufmann,
  2003.

\end{thebibliography}

\end{document}